\documentclass[10pt,journal]{IEEEtran}
\usepackage{cite,calc}
\usepackage[cmex10]{amsmath} 
\interdisplaylinepenalty=2500 
\usepackage{amsfonts,amssymb,graphics,subfigure,epsfig,graphics}
\usepackage[mathcal]{euscript} 

\usepackage{mathrsfs}

\newcommand\nc\newcommand
\nc\bfa{{\boldsymbol a}}\nc\bfA{{\boldsymbol A}}\nc\cA{{\mathcal A}}
\nc\bfb{{\boldsymbol b}}\nc\bfB{{\boldsymbol B}}\nc\cB{{\mathcal B}}
\nc\bfc{{\boldsymbol c}}\nc\bfC{{\boldsymbol C}}\nc\cC{{\mathcal C}}
\nc\sC{{\mathscr C}}
\nc\bfd{{\boldsymbol d}}\nc\bfD{{\boldsymbol D}}\nc\cD{{\mathcal D}}
\nc\bfe{{\boldsymbol e}}\nc\bfE{{\boldsymbol E}}\nc\cE{{\mathcal E}}
\nc\bff{{\boldsymbol f}}\nc\bfF{{\boldsymbol F}}\nc\cF{{\mathcal F}}
\nc\bfg{{\boldsymbol g}}\nc\bfG{{\boldsymbol G}}\nc\cG{{\mathcal G}}
\nc\bfh{{\boldsymbol h}}\nc\bfH{{\boldsymbol H}}\nc\cH{{\mathcal H}}
\nc\bfi{{\boldsymbol i}}\nc\bfI{{\boldsymbol I}}\nc\cI{{\mathcal I}}
\nc\bfj{{\boldsymbol j}}\nc\bfJ{{\boldsymbol J}}\nc\cJ{{\mathcal J}}
\nc\bfk{{\boldsymbol k}}\nc\bfK{{\boldsymbol K}}\nc\cK{{\mathcal K}}
\nc\bfl{{\boldsymbol l}}\nc\bfL{{\boldsymbol L}}\nc\cL{{\mathcal L}}
\nc\bfm{{\boldsymbol m}}\nc\bfM{{\boldsymbol M}}\nc\sM{{\mathscr M}}
\nc\bfn{{\boldsymbol n}}\nc\bfN{{\boldsymbol N}}\nc\cN{{\mathcal N}}
\nc\bfo{{\boldsymbol o}}\nc\bfO{{\boldsymbol O}}\nc\cO{{\mathcal O}}
\nc\bfp{{\boldsymbol p}}\nc\bfP{{\boldsymbol P}}\nc\cP{{\mathcal P}}
\nc\bfq{{\boldsymbol q}}\nc\bfQ{{\boldsymbol Q}}\nc\cQ{{\mathcal Q}}
\nc\bfr{{\boldsymbol r}}\nc\bfR{{\boldsymbol R}}\nc\cR{{\mathcal R}}
\nc\bfs{{\boldsymbol s}}\nc\bfS{{\boldsymbol S}}\nc\cS{{\mathcal S}}
\nc\bft{{\boldsymbol t}}\nc\bfT{{\boldsymbol T}}\nc\cT{{\mathcal T}}
\nc\bfu{{\boldsymbol u}}\nc\bfU{{\boldsymbol U}}\nc\cU{{\mathcal U}}
\nc\bfv{{\boldsymbol v}}\nc\bfV{{\boldsymbol V}}\nc\cV{{\mathcal V}}
\nc\bfw{{\boldsymbol w}}\nc\bfW{{\boldsymbol W}}\nc\cW{{\mathcal W}}
\nc\bfx{{\boldsymbol x}}\nc\bfX{{\boldsymbol X}}\nc\cX{{\mathcal X}}
\nc\bfy{{\boldsymbol y}}\nc\bfY{{\boldsymbol Y}}\nc\cY{{\mathcal Y}}
\nc\bfz{{\boldsymbol z}}\nc\bfZ{{\boldsymbol Z}}\nc\cZ{{\mathcal Z}}

\DeclareMathOperator{\supp}{supp}
\DeclareMathOperator{\rank}{rank}

\DeclareMathOperator{\wt}{wt}

\newcommand{\h}{h_\mathrm{B}}
\newcommand{\avg}{{\mathbb E}}
\newcommand{\dist}{d_\mathrm{H}}

\newtheorem{theorem}{Theorem}
\newtheorem{definition}{Definition}
\newtheorem{lemma}[theorem]{Lemma}
\newtheorem{proposition}[theorem]{Proposition}
\newtheorem{corollary}[theorem]{Corollary}

\newtheorem{remark}{\indent Remark}

\newcommand\reals{{\mathbb R}}

\newcommand\ff{{\mathbb F}}


\begin{document}

\title{Update-Efficiency and Local Repairability Limits\\
       for Capacity Approaching Codes}

\author{Arya Mazumdar~\IEEEmembership{Member,~IEEE}, Venkat Chandar, and Gregory W. Wornell~\IEEEmembership{Fellow,~IEEE}

\thanks{Manuscript received May 1, 2013, revised Oct.\ 1, 2013.  This
  work was supported in part by the US Air Force Office of Scientific
  Research under Grant No.~FA9550-11-1-0183, and by the National
  Science Foundation under Grant No.~CCF-1017772, and a grant from
  University of Minnesota.}  \thanks{A.~Mazumdar was with the Research
  Laboratory of Electronics, Massachusetts Institute of Technology,
  Cambridge, MA~~02139.  He is now with the Department of Electrical
  and Computer Engineering, University of Minnesota, Minneapolis,
  MN~~55455 (Email: arya@umn.edu).}  \thanks{V.~Chandar is with MIT
  Lincoln Laboratory, Lexington, MA~~02420 (Email: vchandar@mit.edu).}
\thanks{G.~W.~Wornell is with the Department of Electrical Engineering
  and Computer Science, Massachusetts Institute of Technology,
  Cambridge, MA~~02139 (Email: gww@mit.edu).}  \thanks{The results of
  this paper were presented in part at the 2012 IEEE International
  Symposium on Information Theory, Cambridge, MA, July 1-6, 2012, and
  at the Information Theory and Applications Workshop, San Diego, CA,
  Feb.\ 9-14, 2013.}}

\maketitle

\begin{abstract}
Motivated by distributed storage applications, we investigate the
degree to which capacity achieving encodings can be efficiently
updated when a single information bit changes, and the degree to which
such encodings can be efficiently (i.e., locally) repaired when single
encoded bit is lost.

Specifically, we first develop conditions under which optimum
error-correction and update-efficiency are possible, and establish
that the number of encoded bits that must change in response to a
change in a single information bit must scale logarithmically in the
block-length of the code if we are to achieve any nontrivial rate with
vanishing probability of error over the binary erasure or binary
symmetric channels.  Moreover, we show there exist capacity-achieving
codes with this scaling.

With respect to local repairability, we develop tight upper and lower
bounds on the number of remaining encoded bits that are needed to
recover a single lost bit of the encoding.  In particular, we show
that if the code-rate is $\epsilon$ less than the capacity, then for
optimal codes, the maximum number of codeword symbols required to
recover one lost symbol must scale as $\log1/\epsilon$.

Several variations on---and extensions of---these results are also
developed.
\end{abstract}

\begin{IEEEkeywords}
error-control coding, linear codes, locally updatable codes, locally
recoverable codes, low-density generator matrix codes, low-density
parity check codes
\end{IEEEkeywords}

\section{Introduction}

There is a growing need to provide reliable distributed data storage
infrastructure in highly dynamic and unreliable environments.  Storage
nodes (servers) can switch between on- and off-line frequently, due to
equipment failures, breaks in connectivity, and maintenance activity.
Corruptions of the data can also arise.  Furthermore, the data itself
is often changing frequently, requiring the constant updating of
storage nodes.  

These characteristics place additional demands on the error-control
coding typically considered for such systems, and there are important
questions about the degree to which those demands can be accommodated,
which the community has begun to investigate in recent years.

In this paper, we focus on two specific such demands.  The first is
that the coding be \emph{update-efficient} or
\emph{locally-updatable}, i.e., that small changes in the data require
only small changes in its coded representation.  The second is that
the coding be \emph{locally-repairable} or \emph{recovery-efficient},
i.e., that small portions of the coded representation that are lost
can be recovered from correspondingly small portions of the rest of
the encoding.  The degree to which these demands can be accommodated
affects the bandwidth and energy consumption requirements of such
infrastructure.

\subsection{Update-Efficiency}
\label{sec:update}

While the notion of update-efficiency is implicit in work on array
codes---see, for example, \cite{xu1999x,jin2009p}---the first
substantial analysis of update-efficiency appears in
\cite{anthapadmanabhan2010update}.  

When changes in the data are significant, then updating a linear
fraction of the encoding is unavoidable for capacity-approaching
codes.  However, when the changes are incremental, updating a
sublinear fraction of the encoding can be sufficient.  For example,
\cite{anthapadmanabhan2010update} considers codes over a binary
alphabet and shows the existence of a code that achieves the capacity
of the binary erasure channel (BEC) with the property that any
single-bit change in the message requires only a logarithmic (in the
blocklength) number of bits to be updated in the codeword.

Among related work, \cite{jule2011some} examines the update-efficiency
of codes with structure---specifically, random linear codes---and
\cite{rawat2011update}, in the context of distributed storage, shows
that using the randomized codes proposed in
\cite{anthapadmanabhan2010update} it is possible to have
capacity-achieving update-efficient codes for the BEC that also
minimize a particular measure of the bandwidth required to replace
coded data lost as the result of a node failure.\footnote{This notion
  of repair bandwidth turns out to be somewhat different than the
  notion of local repairability we describe in Section~\ref{sec:lr}.}

In this paper, we begin with a brief discussion of update-efficient
codes that can correct arbitrary sets of (adversarial)
errors/erasures, since such adversarial models are common in current
storage applications.  In this case, update-efficiency and
error-correctability are directly conflicting properties.  In
particular, it is impossible to correct more than $\sim t/2$ errors
(or $\sim t$ erasures) with a code that needs at most $t$ bits of
update for any single-bit change in the message.  This is because the
minimum pairwise distance between the codewords (i.e., the
\emph{minimum distance} of the code) is upper bounded by $t$.  We
discuss several properties of linear codes that are useful for
constructing good update-efficient adversarial-error-correcting codes,
i.e., codes that achieve the best possible tradeoff.  Perhaps the most
interesting observation for this scenario is that if there exists a
linear code with a given rate and minimum distance, then there exists
another linear code with same parameters that is as update-efficient
as possible.

For the remainder of our development on update-efficiency, we then
turn our attention to the random failure model, where much better
\emph{average} performance is possible.  We begin with a simple
derivation of one of the main propositions of
\cite{anthapadmanabhan2010update}, i.e., that there exist linear codes
that achieve the capacity of the BEC such that for any single-bit
change in the message, only $O(\log n)$ bits have to be updated in a
codeword of length $n$.  However, our main result is the converse
statement: we show that if a linear code of positive rate achieves
arbitrarily small probability of error over a BEC, then a single-bit
change in the message must incur an updating of $\Omega(\log n)$ bits
in the codeword.  In addition, we estimate $\gamma >0$ such that there
cannot exist a linear code with positive rate and arbitrarily small
probability of error that requires fewer than $\gamma \log n$ bit
updates in the codeword per single bit change in the message.

\subsection{Local Repairability}
\label{sec:lr}

The potential for a code to be locally-repairable---which is, in some
sense, a notion dual to that of update-efficiency---was first analyzed
in \cite{gopalan2012locality}. 

For significant losses of codeword symbols, the minimum distance
properties of the code naturally characterize the worst-case
requirements to repair the encoding.  However, the ability to repair
smaller losses depends on other properties of the code.  For example,
\cite{gopalan2012locality} shows that it is possible to recover any
single symbol of any codeword from at most a constant number of other
symbols of the codeword, i.e., a number of symbols that does not grow
with the length of the code.    

From this perspective, in practice one might hope to have codes with
both a large minimum distance and that require the fewest number of
other symbols to recover single symbol losses.

To this end, \cite{gopalan2012locality,papailiopoulos2012locally}
consider locally repairable codes that also correct a prescribed
number of adversarial errors (or erasures), and develop a trade-off
between the local repairability and error-correctability.  In
particular, it is shown that for a $q$-ary linear code ($q\ge2$) of
blocklength $n$, the minimum distance $d$ satisfies
\begin{equation*} 
d \le n -k -\left\lceil\frac{k}{r}\right\rceil +2,
\end{equation*}
where $k$ is the code dimension and $r$ 
symbols required to reconstruct a single symbol of the code (referred
to as the local repairability)..

Such results can be generalized to nonlinear codes of any desired
alphabet size.  Indeed, \cite{cadambe2013upper} shows that for any
$q$-ary code with size $M$, local repairability $r$, and minimum
distance $d$, we have
\begin{equation}
\log M \le \min_{1\le t \le \left\lceil \frac{n}{r+1} \right\rceil} \Big[ tr + \log A_q(n  - t(r+1),d) \Big],
\end{equation}
where $A_q(n,d)$ is the maximum size of a $q$-ary code of length $n$
and distance $d$.  

It should be noted that, in contrast to the case of update-efficiency,
which requires a number of codewords to be close to each other, there
is no immediate reason that a code cannot have both good
local repairability and good error-correctability.  Perhaps not
surprisingly, there has, in turn, been a growing literature exploring
locally repairable codes with other additional properties; see, e.g.,
\cite{kamath2013explicit,rawat2012optimal,tamo2013optimal}.

In this paper, we focus on probabilistic channel models that take into
account the statistics of node failure, and optimize average
performance.  To the best of our knowledge the associated capacity
results for locally repairable codes have yet to be developed.
Indeed, the analysis of local repairability in the existing literature
is almost invariably restricted to an adversarial model for node
failure.  While combinatorially convenient, there is no guarantee the
resulting codes are good in an average sense.

In our development, we first show that it is possible to construct
codes operating at a rate within $\epsilon$ of the capacity of the BEC
that have both local repairability $O(\log 1/\epsilon)$ and an
update-efficiency scaling logarithmically with the block-length.
However, our main result in this part of the paper is a converse
result establishing that the scaling $O(\log 1/\epsilon)$ is optimal
for a BEC---specifically, we establish that if the rate of a code that
achieves arbitrarily small probability of error over a BEC is
$\epsilon$ below capacity, then the local repairability is
$\Omega(\log 1/\epsilon)$.

\subsection{Channels with Errors}

Most of our development focuses on the case of ``hard'' node failures
that result data loss.  However, in some scenarios node failures are
``softer'', resulting in data corruption.  While the BEC is a natural
model for the former, it is the binary symmetric channel (BSC) that is
the corresponding model for the latter.  Much, but not all, of our
development carries over to the case of the BSC.

In particular, our results on the existence of capacity-achieving
codes that are both update-efficient and locally repairable also hold
for the BSC.  Likewise, our converse result for update-efficient
linear codes also holds the BSC.  However, we have an additional
converse result for general codes that applies only to the BSC, and
our converse result for local repairability applies only to the BEC.

\subsection{Organization} 

The organization of the paper is as follows.
Section~\ref{sec:notations} establishes notation that will be used
throughout the paper.  In Section~\ref{adversarial}, we discuss the
worst-case error-correction capability of an update-efficient code.
In Section~\ref{explicit} we show that there exist linear codes of
length $n$ and rate $\epsilon$ less than capacity, with
update-efficiency logarithmic in blocklength and local repairability
$O(\log1/\epsilon)$. In Section~\ref{sec:LDGM}, our main impossibility
results for capacity-achieving update-efficient codes are presented.
Subsequently, in Section~\ref{sec:recover} we address the local
repairability of capacity-achieving codes and deduce a converse result
that matches the achievability part.  In Section~\ref{sec:general}, we
give a generalized definition of update efficient codes.  In
Section~\ref{ratedistortion} we note that the notions of
update-efficiency and local repairability are also applicable to
\emph{source coding}, i.e., data compression, and we briefly discuss
the associated dual problem of lossy source coding in the context of
update-efficiency and local recoverability.  Finally,
Section~\ref{sec:conc} contains some concluding remarks.


\section{Definitions and Notation}\label{sec:notations}

First, we use BEC($p$) to denote a binary erasure channel with loss
probability $p$, which has capacity $1-p$.  Analogously, we use
BSC($p$) to denote a binary symmetric channel with crossover
probability $p$, which has capacity $1-\h(p)$ where $\h(p) =
-p\log_2(p) - (1-p)\log_2(1-p)$ is the binary entropy function.

Next, for our purposes a \emph{code} $\cC\in \ff_2^n$ is a collection
of binary $n$-vectors.  The \emph{support} of a vector $\bfx$ (written
as $\supp(\bfx)$) is the set of coordinates where $\bfx$ has nonzero
values. By the \emph{weight} of a vector we mean the size of support
of the vector. It is denoted as $\wt(\cdot)$.  All logarithms are
base-$2$ unless otherwise indicated.

Let $\sM$ be the set of all possible messages.  Usually accompanied
with the definition of the code, is an injective encoding map $\phi
:\sM \to \cC,$ which defines how the messages are mapped to
codewords. In the following discussion, let us assume $\sM = \ff^k_2$.
In an update-efficient code, for all $\bfx \in \ff^k_2,$ and for all
$\bfe\in \ff_2^k : \wt(\bfe) \le u,$ we have $\phi(\bfx +\bfe) =
\phi(\bfx) + \bfe',$ for some $\bfe' \in \ff_2^n: \wt(\bfe') \le t.$ A
special case of this is captured in the following definition.
\begin{definition}\label{defn:update}
The \emph{update-efficiency} of a code $\cC$ and the encoding $\phi$, is the maximum number of bits that needs to be changed
in a codeword when a single bit in the message is changed. A code $(\cC,\phi)$ has update-efficiency
$t$ if
for all $\bfx \in \ff^k_2,$ and for
all $\bfe\in \ff_2^k : \wt(\bfe) =1,$ we have $\phi(\bfx +\bfe) = \phi(\bfx) + \bfe',$
for some $\bfe' \in \ff_2^n: \wt(\bfe') \le t.$
\end{definition}

A linear code  $\cC\in \ff_2^n$ of dimension $k$ is a $k$-dimensional
subspace of the vector space $\ff_2^n$. For linear codes the
 mapping $\phi:\ff_2^k \to \cC$ is naturally given by 
a {\em generator matrix}: $\phi(\bfx) = \bfx^T G$,  for any $\bfx \in \ff_2^k$.
 There can be a number of generator matrices for a code
$\cC$, which correspond to different labelings of the codewords. 
By an $[n,k,d]$ code  we mean a linear code with length $n$, dimension
$k$ and minimum (pairwise) distance between the codewords $d$.
Linear codes form
the most studied classes of error-correcting codes, and have a number of
benefits in terms of representation and encoding and decoding complexity. 

 
For a linear code, when changing one bit in the message, the maximum number of bits that
need to be changed in the codeword is the maximum over the weights of the rows of the
generator matrix. Hence, for an update-efficient code,
we need a representation of the linear code 
where the maximum weight of the rows of the generator matrix is low.
\begin{proposition}\label{prop:linear}
A linear code $\cC$ will have update-efficiency $t$ if and only if
there is a generator matrix $G$ of $\cC$ with maximum row weight $t.$
\end{proposition}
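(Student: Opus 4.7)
The proof is essentially a direct unpacking of the definitions, together with the observation that, for a linear code, every encoding map induced by a generator matrix is $\ff_2$-linear and every linear bijection $\ff_2^k \to \cC$ corresponds to some generator matrix. I would handle the two implications separately.

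For the ``if'' direction, suppose $G$ is a generator matrix of $\cC$ whose rows $\bfg_1,\dots,\bfg_k$ all have Hamming weight at most $t$. Take the linear encoding $\phi(\bfx) = \bfx^T G$. For any $\bfx \in \ff_2^k$ and any unit vector $\bfe_i$, linearity gives
\[
\phi(\bfx + \bfe_i) = \bfx^T G + \bfe_i^T G = \phi(\bfx) + \bfg_i,
\]
so the difference vector $\bfe' = \bfg_i$ satisfies $\wt(\bfe') \le t$. Applying this for any message coordinate $i$ shows that the code $(\cC,\phi)$ has update-efficiency at most $t$ in the sense of Definition~\ref{defn:update}.

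For the ``only if'' direction, suppose $\cC$ admits an encoding $\phi$ of update-efficiency $t$. Since we are working with linear codes, $\phi$ is given by $\phi(\bfx) = \bfx^T G$ for some generator matrix $G$ with rows $\bfg_1,\dots,\bfg_k$. Specializing the update-efficiency condition to $\bfx = \mathbf{0}$ and $\bfe = \bfe_i$ yields
\[
\phi(\bfe_i) - \phi(\mathbf{0}) = \bfg_i = \bfe',
\]
and the update-efficiency hypothesis forces $\wt(\bfg_i) = \wt(\bfe') \le t$. Taking the maximum over $i$ gives a generator matrix of $\cC$ with maximum row weight at most $t$, as required.

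I do not expect any real obstacle here. The only mildly delicate point is the tacit convention that, for a linear code, ``having update-efficiency $t$'' means admitting \emph{some} linear encoding (equivalently, some generator matrix) of update-efficiency $t$; once this is acknowledged, both implications reduce to reading off the $i$-th row of $G$ as the codeword-side difference induced by flipping the $i$-th message bit.
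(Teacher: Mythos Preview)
Your proof is correct and follows essentially the same route as the paper's: both directions reduce to reading off the $i$th row of $G$ as the codeword increment induced by flipping the $i$th message bit. You are in fact slightly more careful than the paper in flagging the tacit convention that the encoding $\phi$ of a linear code is itself linear; the paper's proof uses this implicitly when it speaks of ``the generator matrix given by $\phi$.''
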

\begin{IEEEproof}
It is easy to see that if the maximum number of ones in any row is bounded above by $t$, then
at most $t$ bits need to be changed to update one bit change in the message.

On the other hand if the code has update-efficiency $t$ then there must exist a labeling
$\phi$ that gives a sparse generator matrix. Specifically, each of the vectors $(1, 0,\dots,0), (0,1,\dots, 0),
\dots, (0,0\dots,1) \in \ff_2^k$ must produce codeword of weight at most $t$ under 
 $\phi.$ Therefore, the generator matrix given by $\phi$ will have
row weight at most $t.$
\end{IEEEproof}
This implies, given a linear code, to see whether it is update-efficient or not, we need to find
the sparsest basis for the code. A linear code with a sparse
basis is informally called a {\em low density generator matrix (LDGM)} code. 

\vspace{0.1in}

There are a number of different ways that local recovery could
be defined. The simplest
is perhaps the one given below, which insists that for each codeword 
symbol, there is a set of
at most $r$ codeword positions that need to be queried to recover the given
symbol with certainty. A weaker definition could allow adaptive queries,
i.e., the choice of which $r$ positions to query could depend on the
values of previously queried symbols. Finally, one could ask that instead
of obtaining the value of the codeword symbol with certainty, one
obtains the value with some probability significantly higher than $.5$ (probabilistic recovery).
We sketch all the arguments in this paper for the simplest definition, i.e., Defn.~\ref{defn:recov}.
 The
results can  be extended to probabilistic recovery without much change in the argument. 

\begin{definition}\label{defn:recov}
A code $\cC\subset \ff_2^n$ has {\em local recoverability} $r$, if for any $\bfx = (x_1, \ldots , x_n) \in
\cC$ and for any $1\le i\le n$, there exists a function $f_i:\ff_2^r \to \ff_2$ and 
indices $1\le i_1, \ldots , i_r\le n, i_j \ne i, 1\le j\le r,$
 such that $x_i = f_i(x_{i_1}, \ldots, x_{i_r})$.   
\end{definition}

A generator matrix $H$ of the null-space of a linear code $\cC$ is called a
parity-check matrix for $\cC$. It is to be noted that for any $\bfx \in \cC, H\bfx =0$. 
A {\em low density parity-check} (LDPC) code is a linear code
with a parity check matrix such that each row of the parity check matrix has a small (constant)
number of nonzero values.
The following proposition is immediate.
\begin{proposition}
If the maximum row-weight of a parity-check matrix of a code is $r$, then
the code has local recoverability at most $r$.
\end{proposition}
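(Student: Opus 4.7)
The plan is to exhibit, for each codeword position $i$, an explicit recovery function built from a single parity-check equation. Concretely, fix a parity-check matrix $H$ of $\cC$ whose rows all have weight at most $r$, and fix an arbitrary coordinate $i \in \{1,\ldots,n\}$. I would first locate a row $\bfh$ of $H$ such that $h_i = 1$; this is the one structural fact I need to establish before proceeding.

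To justify existence of such a row, I would argue as follows. If every row of $H$ had a $0$ in column $i$, then column $i$ of $H$ would be entirely zero, and in particular $H \bfe_i^T = \mathbf{0}$, where $\bfe_i$ is the $i$-th standard basis vector. This would put $\bfe_i$ in the code $\cC$, so both the zero codeword $\mathbf{0}$ and $\bfe_i$ belong to $\cC$. These two codewords agree in all positions other than $i$ but differ in position $i$, so no function $f_i$ of the remaining coordinates could possibly recover $x_i$, and local recoverability would already fail. Hence we may assume this degenerate case is ruled out (and if one wishes a uniform statement, one may freely adjoin the all-ones row or any row through $i$ of weight at most $n-k$ without changing the code, provided such a row exists within the weight budget).

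Once the row $\bfh$ is fixed with $h_i=1$ and $\wt(\bfh)\le r$, the defining property $\bfh \bfx^T = 0$ for every $\bfx\in\cC$ reads
\begin{equation*}
x_i \;+\; \sum_{j\in \supp(\bfh)\setminus\{i\}} x_j \;=\; 0 \pmod{2},
\end{equation*}
which rearranges to
\begin{equation*}
x_i \;=\; \sum_{j\in \supp(\bfh)\setminus\{i\}} x_j \pmod{2}.
\end{equation*}
The right-hand side is a fixed $\ff_2$-linear function of $|\supp(\bfh)|-1 \le r-1$ other codeword bits, so I can take this XOR to be the function $f_i$ in Definition~\ref{defn:recov}, with the indices $i_1,\ldots,i_{r-1}$ being the elements of $\supp(\bfh)\setminus\{i\}$ (padded arbitrarily if one insists on exactly $r$ queries). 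This furnishes a local recovery procedure for $x_i$ using at most $r$ other symbols.

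Since the argument applies uniformly to every coordinate $i$, the code has local recoverability at most $r$, completing the proof. The only real subtlety is the zero-column edge case handled above; everything else is immediate bookkeeping from the parity-check equation.
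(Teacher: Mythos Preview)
Your argument is correct and is exactly what the paper has in mind---the paper calls the proposition ``immediate'' and supplies no proof, and the natural proof is precisely to pick a parity-check row through position $i$ and solve the resulting linear equation for $x_i$. You even obtain local recoverability at most $r-1$, slightly stronger than stated.

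One small correction to your edge-case handling: your observation that a zero column $i$ in $H$ forces $\bfe_i\in\cC$ and hence makes $x_i$ unrecoverable from the other coordinates is correct, but the suggested fix of ``adjoining a row through $i$'' does not work. If column $i$ of $H$ is zero then \emph{every} vector in the dual code $\cC^\perp$ vanishes at position $i$, so there is no parity check through $i$ available to adjoin without changing the code. The honest resolution is simply that the proposition tacitly assumes $H$ has no zero columns (equivalently, $d(\cC)\ge 2$), which is harmless in the paper's setting.
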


Hence, LDPC codes are locally recoverable. We note that this is almost
a necessary condition for local recoverability. To make this a necessary condition,
one needs to consider not just rows of the parity check matrix, but the entire
dual code. A necessary condition for local recoverability is that the 
dual code contains low weight codewords whose supports cover any of the $n$ coordinates.

\section{Adversarial channels}\label{adversarial}
The adversarial 
error model is ubiquitously studied in the storage literature.
In an adversarial error model, the channel is allowed to
introduce up to $s$ errors (or $2s$ erasures), and the location of these errors
can be chosen by an adversary. It is known that to correct $s$ 
errors ($2s$ erasures), the minimum distance of the code needs to be at least
$2s+1.$ However, if a code has update-efficiency $t$, then there must exist
two (in fact, many more) codewords that are within distance $t$ of each other.
Hence, small update-efficiency implies poor adversarial error correction capability,
and we cannot hope to find good codes if we adopt an adversarial error model. 
Nevertheless, before moving on to the main results of the paper, we briefly digress to
make some observations of theoretical interest regarding adversarial error models. 



In a code with minimum pairwise distance between codewords $d$,
the update-efficiency has to be at least $d$, because the nearest codeword is
at least distance $d$ away. That is, if the update-efficiency of the code $\cC$ is denoted by 
$t(\cC)$, then 
$$
t(\cC) \ge d(\cC),
$$ where $d(\cC)$ is the minimum distance of the code.  The main observation
we would like to make in 
this section is that the above bound is in fact achievable with
the best possible parameters of a linear code.  We have seen in
Section~\ref{sec:notations} that for a linear code $\cC$, the
update-efficiency is simply the weight of the maximum weight row of a
generator matrix. The following theorem is from \cite{dodunekov1985improvement}.

\begin{theorem}
Any binary linear code of length $n$, dimension $k$ and distance $d$ has a generator 
matrix consisting of rows of weight $\le d+s$, where
\begin{equation*} 
s = \Big(n - \sum_{j=0}^{k-1}\Big\lceil\frac{d}{2^j}\Big\rceil\Big) 
\end{equation*}
 is a nonnegative integer.
\end{theorem}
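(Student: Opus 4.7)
The plan is to argue by induction on $k$, using the classical residual code construction that also underlies the Griesmer bound. Let $\bfc_1 \in \cC$ be a minimum-weight codeword, so $\wt(\bfc_1) = d$; after a coordinate permutation assume $\supp(\bfc_1) = \{1,\dots,d\}$. Define the \emph{residual code} $\cC'$ to be the projection of $\cC$ onto the last $n-d$ coordinates.

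First I would establish that $\cC'$ is an $[n-d,\,k-1,\,d']$ code with $d' \ge \lceil d/2\rceil$. The dimension drops by exactly one because the kernel of the projection consists of codewords supported inside $\{1,\dots,d\}$, which by minimum distance must be $\mathbf{0}$ or $\bfc_1$. For the distance bound, given any $\bfc \in \cC$, let $a$ and $b$ denote its weights on $\supp(\bfc_1)$ and on the complement, respectively. Then $\wt(\bfc) = a+b$ and $\wt(\bfc + \bfc_1) = (d-a)+b$, and both are $\ge d$. Summing yields $2b \ge d$, i.e., $b \ge \lceil d/2\rceil$, which is precisely the residual weight of $\bfc$.

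Next I would run the induction. By the inductive hypothesis, $\cC'$ has a generator matrix $G'$ all of whose rows have weight at most $\lceil d/2\rceil + s'$, where
\begin{equation*}
s' = (n-d) - \sum_{j=0}^{k-2}\left\lceil\frac{\lceil d/2\rceil}{2^j}\right\rceil
   = (n-d) - \sum_{j=1}^{k-1}\left\lceil\frac{d}{2^j}\right\rceil
   = n - \sum_{j=0}^{k-1}\left\lceil\frac{d}{2^j}\right\rceil = s,
\end{equation*}
using the identity $\lceil\lceil d/2\rceil/2^j\rceil = \lceil d/2^{j+1}\rceil$. The same induction also gives $s \ge 0$ (this is exactly the Griesmer bound, so nonnegativity comes along for free). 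The base case $k=1$ is immediate: the only nonzero codewords are $\bfc_1$, and $s = n-d \ge 0$.

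Finally I would lift and assemble. Each row $\bfr_i$ of $G'$ pulls back to some $\tilde{\bfr}_i \in \cC$ whose restriction to the last $n-d$ coordinates equals $\bfr_i$; this contributes weight at most $\lceil d/2\rceil + s$ on those positions. On $\supp(\bfc_1)$, if $\tilde{\bfr}_i$ has weight $a$, then $\tilde{\bfr}_i + \bfc_1$ has weight $d-a$ there, and at least one of these two representatives has weight $\le \lfloor d/2\rfloor$ on $\supp(\bfc_1)$. Choosing this representative yields total weight at most $\lfloor d/2\rfloor + \lceil d/2\rceil + s = d+s$. Adjoining $\bfc_1$ itself (of weight $d \le d+s$) as the $k$-th row produces $k$ vectors that are linearly independent, since the projections of the $\tilde{\bfr}_i$ form a basis of $\cC'$ while $\bfc_1$ lies in the kernel of the projection. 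The main obstacle is bookkeeping: verifying the ceiling identity $\lceil\lceil d/2\rceil/2^j\rceil = \lceil d/2^{j+1}\rceil$ so that the residual parameter $s'$ collapses exactly to $s$, and making sure the lifted rows plus the $\bfc_1$ correction genuinely stay within the weight budget $d+s$; no combinatorial surprise arises, the whole argument is the same recursion that proves Griesmer, repurposed to read off a sparse generator matrix.
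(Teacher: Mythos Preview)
The paper does not actually prove this theorem; it quotes it from Dodunekov and Manev \cite{dodunekov1985improvement} and uses it as a black box. Your inductive argument via the residual code is precisely the classical construction from that reference (the same recursion that proves the Griesmer bound), so you have correctly reconstructed the original proof.

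One small point worth tightening: when you invoke the inductive hypothesis on $\cC'$, its minimum distance is some $d' \ge \lceil d/2\rceil$, not necessarily equal to $\lceil d/2\rceil$, so the hypothesis literally gives row weights $\le d' + s'(d')$ rather than $\lceil d/2\rceil + s'(\lceil d/2\rceil)$. This is harmless because the bound $d' + s'(d') = (n-d) - \sum_{j=1}^{k-2}\lceil d'/2^j\rceil$ is non-increasing in $d'$, so replacing $d'$ by the smaller value $\lceil d/2\rceil$ only loosens the bound; you should state this monotonicity explicitly. Everything else---the kernel computation, the $b\ge\lceil d/2\rceil$ argument, the lift-and-correct step with $\bfc_1$, the ceiling identity, and the linear-independence check---is clean.
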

The fact that $s$ is a non-negative integer also follows from the
well-known Griesmer bound \cite{MS1977}, which states that for any
linear code with dimension $k$, distance $d$, and length $n\ge
\sum_{j=0}^{k-1}\lceil d/2^j \rceil$.

\begin{corollary}
For any linear $[n,k,d]$ code $\cC$ with update-efficiency $t$,
\begin{equation*} 
d \le t \le d +  \Big(n - \sum_{j=0}^{k-1}\Big\lceil\frac{d}{2^j}\Big\rceil\Big).
\end{equation*}
\end{corollary}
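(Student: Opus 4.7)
The plan is to observe that both inequalities follow immediately from material already established in the excerpt, so the proof amounts to assembling two pieces.

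For the lower bound $d \le t$, I would argue as in the paragraph preceding the theorem: fix any $\bfx \in \ff_2^k$ and any $\bfe \in \ff_2^k$ of weight one. Since $\phi$ is injective, $\phi(\bfx)$ and $\phi(\bfx+\bfe)$ are distinct codewords, and by the update-efficiency hypothesis they differ in at most $t$ coordinates. Hence the minimum distance satisfies $d \le t$.

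For the upper bound, I would invoke Proposition~\ref{prop:linear}, which states that $t$ equals the minimum, over all generator matrices $G$ of $\cC$, of the maximum row weight of $G$. The theorem of Dodunekov quoted just above guarantees the existence of a generator matrix for $\cC$ whose rows all have weight at most $d+s$, where $s = n - \sum_{j=0}^{k-1}\lceil d/2^j \rceil$. Taking that particular generator matrix, its maximum row weight is at most $d+s$, and hence $t \le d+s$, which is the claimed upper bound.

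There is no real obstacle here, since the corollary is stated precisely as the combination of the two facts. The only thing to double check is that the definition of update-efficiency matches the ``max row weight of a generator matrix'' characterization used by Proposition~\ref{prop:linear}: we need the encoder $\phi$ in Definition~\ref{defn:update} to be allowed to vary over all labelings of $\cC$, so that the infimum of max row weights over all generator matrices is achieved. This is consistent with the discussion preceding Proposition~\ref{prop:linear}, and so no further argument is needed.
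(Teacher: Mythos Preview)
Your proposal is correct and follows essentially the same route as the paper: the lower bound is the observation $t(\cC)\ge d(\cC)$ made just before the quoted theorem, and the upper bound is an immediate application of Dodunekov's theorem via Proposition~\ref{prop:linear}. Your closing remark about the encoder $\phi$ being allowed to vary is a legitimate clarification of an ambiguity in the paper (the corollary only makes sense for the optimal $t(\cC)$, since for a fixed bad labeling the upper bound can fail), but the paper simply treats this as understood.
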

It is clear that for codes achieving the Griesmer bound with
equality, the update-efficiency is exactly equal to the minimum distance,
i.e., the best possible.
There are a number of families of codes that achieve
the Griesmer bound. For examples of such families and their 
characterizations, we refer the reader to \cite{belov1974construction,helleseth1983new}.

\vspace{0.1in}
{\em Example:}
 Suppose $\cC$ is a $[n =2^m -1,k  = 2^m -1-m, 3]$ Hamming code. 
 For this code 
 $$
 t(\cC) \le 3 + (n - 3 -2 -(k-2)) = n-k = m = \log(n+1).
 $$ 
 One can easily achieve update-complexity $1+ \log(n+1)$ for Hamming codes.
 Simply bring any $k\times n$ generator matrix of Hamming code into
 systematic form, 
 resulting in the maximum weight of a row being bounded above by 
 $1+ (n-k) = 1+\log(n+1)$. This special case was mentioned in \cite{anthapadmanabhan2010update}. 
 This can also be argued from the point of view that as the
 generator polynomial of a Hamming code (cyclic code) has degree $m$,
 the maximum row-weight of a generator of a Hamming code will be at most
 $m+1 = \log(n+1)+1.$ 
 
 However, we can do even better by 
  explicitly constructing a generator matrix for the Hamming code in the following way.
  Let us index the columns of the generator matrix by $1, 2, \dots, 2^m-1$, and use
the notation $(i,j,k)$ to denote the vector with exactly three $1$'s, located at
positions $i, j,$ and $k$. Then, the Hamming code has a generator matrix given by
the row vectors
$(i, 2^j, i+2^j)$ for $1\le j \le m-1, 1\le i < 2^j.$
%
 This shows that for all $n$, Hamming codes have update-efficiency
only $3.$ To prove this without explicitly constructing a generator matrix, 
and to derive some other consequences, we need 
the following theorem by Simonis \cite{simonis1992generator}.

\begin{theorem}
Any $[n,k,d]$ binary linear code can be transformed into a code with the
same parameters that has a generator
matrix consisting of only weight $d$ rows.
\end{theorem}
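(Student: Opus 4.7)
The plan is to reformulate the theorem as an existence statement — given that an $[n,k,d]$ code exists, produce one in which the span of weight-$d$ codewords equals the whole code — and then show this reduction works by an iterative ``parameter-preserving surgery'' on the code.

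First I would let $\cC$ be any $[n,k,d]$ code and set $\cC_d\subseteq\cC$ to be the subspace spanned by the weight-$d$ codewords of $\cC$. If $\dim\cC_d=k$, we are immediately done: pick any $k$ linearly independent weight-$d$ codewords and use them as the rows of the generator matrix. Thus the content of the theorem is in the case $\dim\cC_d=k'<k$, where the minimum-weight codewords fail to span. The strategy is to exhibit, in this case, a code $\cC'$ with identical parameters $[n,k,d]$ whose weight-$d$ subcode has strictly larger dimension $k'+1$, and then iterate at most $k-k'$ times.

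To perform the surgery, I would pick a codeword $\bfv\in\cC\setminus\cC_d$; by definition $\wt(\bfv)>d$, so $S:=\supp(\bfv)$ contains more than $d$ coordinates. The goal is to locate a coordinate $i\in S$ and a substitution of columns (either a column swap, or a replacement of the $i$th column by a sum of other columns, or equivalently a small change of variables on $\ff_2^n$ restricted to a few positions) such that in the modified code $\cC'$: (a) no codeword gains lower weight than $d$, (b) the minimum distance therefore stays exactly $d$, and (c) the image of $\bfv$ becomes expressible as a sum of weight-$d$ codewords. The existence of such an $i$ follows by a counting/pigeonhole argument: because $\bfv$ is heavier than $d$ and lies outside $\cC_d$, at least one of its support positions is ``non-critical,'' in the sense that we can remove its contribution to $\bfv$ by redistributing among other coordinates without ever dropping any other codeword's weight below $d$.

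The main obstacle is step (a): verifying that the proposed column-level modification preserves the minimum distance. One wants to argue that whenever the weight-$d$ span is a proper subcode, there is enough ``slack'' in the ambient $\ff_2^n$ to locally modify the code without manufacturing a new low-weight codeword. A clean way to organize this is to track, for each codeword of $\cC$, how its weight changes under the surgery, and observe that the set of codewords whose weight could potentially drop is exactly those supported on the affected columns — and by choosing the modification to preserve $\bfv$'s weight from above while lowering the weight of some element of $\bfv+\cC_d$, one can ensure that nothing crosses below $d$. An alternative route, which I would try in parallel, is to induct on $k$: pick a minimum-weight codeword $\bfc$, consider the shortened code on the complement of $\supp(\bfc)$, apply the inductive hypothesis there, and then lift each basis vector back to $\cC$ by adjusting with $\bfc$ to force weight exactly $d$. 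Either route reduces the theorem to a local combinatorial verification that the chosen transformation is ``safe.''
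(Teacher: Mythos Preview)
The paper does not prove this theorem itself; it quotes the result from Simonis and only remarks afterward that the proof in \cite{simonis1992generator} can be read as an iterative (possibly exponential-time) algorithm that transforms any $[n,k,d]$ code into one whose generator rows all have weight $d$. So there is no in-paper argument to compare against beyond that one-line description of the proof's shape.

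Your framework matches that shape: setting $\cC_d$ to be the span of the weight-$d$ codewords and arguing that whenever $\dim\cC_d<k$ one can pass to another $[n,k,d]$ code with $\dim\cC_d$ strictly larger is exactly the iterative skeleton the paper attributes to Simonis. The gap is that you have not supplied the one step that carries all the weight. You assert that a ``non-critical'' coordinate $i\in\supp(\bfv)$ and a column-level modification exist ``by a counting/pigeonhole argument,'' and that the modification can be chosen so that ``nothing crosses below $d$''; neither claim is argued, and together they are the entire theorem. Column swaps are permutations and never change any codeword's weight, so they cannot help at all; replacing a column by a sum of columns \emph{can} lower the weight of other codewords, and ruling that out is precisely the thing to be proved. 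Your alternative inductive route has the analogous defect: shortening on the complement of a minimum-weight codeword's support need not yield an $[n-d,k-1,d]$ code (the shortened minimum distance can drop), and ``lift back and adjust by $\bfc$ to force weight exactly $d$'' is not a well-defined operation without further work.

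In short, you have correctly reverse-engineered the iterative outline that the paper alludes to, but the combinatorial lemma that drives each iteration---the guaranteed existence of a parameter-preserving move that strictly enlarges $\cC_d$---is asserted rather than proved. That lemma is the substance of Simonis's result, and your sketch does not yet contain it.
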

The implication of this theorem is the following: if there exists an $[n,k,d]$ linear code,
then there exists an $[n,k,d]$ linear code with update-efficiency $d.$ 
The proof of \cite{simonis1992generator} can be presented as an algorithm that transforms
any linear code, given its parameters $[n,k,d]$ and a generator matrix, into an update-efficient 
linear code (a code with update-efficiency equal to
the minimum distance).  The algorithm, in time possibly exponential in $n$, produces a 
new generator matrix with all rows having weight $d$. 
It is of interest to find a polynomial time (approximation) algorithm for the procedure, that is,
a generator matrix with all rows having weight within $d(1+\epsilon)$ for some small $\epsilon$.

On the other hand, the above theorem says that there exists a linear $[n=2^m-1, k+2^m-1-m,3]$
code that has update-efficiency only $3.$ All codes with these parameters are
equivalent to the Hamming code with the same parameters up to a permutation of coordinates \cite{huffman2003fundamentals},
providing an alternate proof that Hamming codes have update-efficiency $3$.

Analysis of update-efficiency for BCH codes and other linear codes 
is of independent interest. In general, finding a sparse basis for a
linear code given its generator matrix seems to be a hard problem, although
the actual complexity class of the problem merits further investigation.
Recently, a sparse basis is presented for $2$-error-correcting  BCH codes in \cite{grigorescu2012explicit}.

We emphasize that the previous remarks, although of theoretical interest, are
fairly strong negative results suggesting
that update efficiency and error correction are fundamentally incompatible requirements.
Although this is the case for adversarial models, if we allow randomization so that
the code can be chosen in a manner unknown to the adversary, 
then it is possible to fool the adversary. In fact, with a randomized code
it is possible to correct $~pn$ adversarial errors with a code rate close to the
capacity of BSC($p$) using ideas put forward in \cite{langberg2004private}. The randomized
code can be chosen to simultaneously have good update efficiency, as shown in the case of erasure
channels by \cite{anthapadmanabhan2010update}. In the rest of the paper, instead of choosing a code
at random to fool an adversary, we consider the classical information theoretic scenario of a random,
rather than an adversarial, channel, although we note that our results easily extend to the case of
using randomization to defeat an adversary.




\section{Existence of good codes}\label{explicit}
In this section, our aim is to show, in a rather simple way, that
there exist linear codes of length $n$ that
\begin{enumerate}
\item  have rate $\epsilon$ less than capacity, $\epsilon >0$, 
\item achieve arbitrarily small probability of error,
\item  have update-efficiency $O(\log n)$ and 
\item have local recoverability $O(\log 1/\epsilon)$.
\end{enumerate}

It is relatively easy to construct a code with 
local recoverability $O(\log 1/\epsilon)$ 
that achieves
 capacity over the BSC or BEC with  an $\epsilon$ gap. One can in principle choose the rows of the parity-check
  matrix randomly
from all low weight vectors, and argue that this random ensemble contains many 
codes that achieve the capacity of the binary symmetric channel (BSC) up to an
additive term $\epsilon$. 
Indeed, LDPC codes achieve the capacity of the binary symmetric channel \cite{gallager1962low}.

Similarly, one may try to construct a low row-weight generator matrix
randomly to show that the ensemble average performance achieves
capacity.  In this direction, some steps have been taken in
\cite{kakhakicapacity}. However, these constructions fail to achieve
local recoverability and update-efficiency simultaneously.  Also, in
\cite{asteris2012repairable}, parallel to a part of our work
\cite{mazumdar2012update}, a rateless code construction was proposed
that achieves both $O(\log k)$ update-efficiency and
local repairability, $k$ being the dimension of the code.  Below, we
describe one simple and intuitive construction that simultaneously
achieves $O(\log k)$ update efficiency and $O(\log 1/\epsilon)$
local repairability, where $\epsilon$ is the gap to capacity.


It is known that for for every $\epsilon >0$ and any sufficiently large $m$, there exists a linear code of
length $m$ and rate $1-\h(p)-\epsilon$ that has probability of incorrect decoding
at most $2^{-E(p,\epsilon)m}.$ There are numerous evaluations of this result and 
estimates of $E(p,\epsilon) >0.$ We refer the reader to \cite{barg2002random} as an example.
Below, $m,n,mR, nR, n/m$ are assumed to be integers. Floor and ceiling functions should be used
where appropriate. However, we avoid using them to maintain clarity in the argument, unless the meaning
 is not obvious from the context.  

Let $m = \frac{(1+\alpha)\log n }{E(p,\epsilon)}$, 
$\epsilon,\alpha >0$.  We know that
for sufficiently large $n$, there exists a linear code $\hat{\cC}$
given by the $mR \times m$ generator matrix $\hat{G}$ with rate $R =
1-\h(p)-\epsilon$ that has probability of incorrect decoding at most
$2^{-E(p,\epsilon)m}.$

Let $G$ be the $nR \times n$ matrix that is the Kronecker product of $\hat{G}$ and the 
$n/m \times n/m$ identity matrix $I_{n/m}$, i.e., 
$$
G = I_{n/m} \otimes \hat{G}.
$$
Clearly a codeword of the code $\cC$  given by $G$ is given by $n/m$ codewords of
the code $\hat{\cC}$ concatenated side-by-side. The probability of error 
of $\cC$ is therefore, by the union bound, at most
\begin{equation*} 
\frac{n}{m}2^{-E(p,\epsilon)m} = \frac{n E(p,\epsilon)}{(1+\alpha) n^{1+\alpha}\log n} = \frac{E(p,\epsilon)}{(1+\alpha) n^{\alpha}\log n}. 
\end{equation*}

However, notice that the generator matrix has row weight bounded above
by $m = (1+\alpha)/E(p,\epsilon)\log n$.  Hence, we have constructed a
code with update-efficiency $(1+\alpha)/E(p,\epsilon)\log n,$ and rate
$1-\h(p)-\epsilon$ that achieves a probability of error less than
$E(p,\epsilon)/[(1+\alpha) n^{\alpha}\log n]$ over a BSC($p$).

%
%


We modify the above construction slightly to produce codes that also possess
good local recoverability.
It is known that LDPC codes achieve a positive error-exponent. That is, 
   for every $\epsilon >0$ and any sufficiently large $m$, there exist an LDPC code of
length $m$ and rate $1-\h(p)-\epsilon$ that has check degree (number of $1$s in a
row of the parity-check matrix) 
  at most $O(\log 1/\epsilon)$, and
 probability of incorrect decoding
at most $2^{-E_L(p,\epsilon)m}$, for some $E_L(p,\epsilon)>0$ \footnote{
There are several works, such as  \cite{gallager1962low, miller2001bounds}, that discuss this result.
For example, we refer the reader to see Thm.~7 and Eqn.~(17) of \cite{miller2001bounds} for a derivation of
the fact.}.
This code
will be chosen as $\hat{\cC}$ in the above construction, and $\hat{G}$ can be any generator 
matrix for $\hat{\cC}$.

The construction now follows without any more changes. We have, $m =
(1+\alpha)/E_L(p,\epsilon) \log n,$ an integer, $\epsilon,\alpha >0$,
and $G = I_{n/m} \otimes \hat{G}$.

%
%

Now, the generator matrix has row weight bounded above by $m =
(1+\alpha)/E_L(p,\epsilon) \log n.$ So, the code has update-efficiency
$(1+\alpha)/E_L(p,\epsilon) \log n,$ rate $1-\h(p)-\epsilon,$ and
achieves probability of error less than
$E_L(p,\epsilon)/[(1+\alpha) n^{\alpha}\log n]$ over a BSC($p$).

Moreover, the parity-check matrix of the resulting code will be block-diagonal,
with each block being the parity-check matrix of the code $\hat{\cC}$.
The parity-check matrix of the overall code has row-weight $O(\log 1/\epsilon)$.
Hence, any codeword symbol can be recovered from at most $O(\log 1/\epsilon)$
other symbols by solving one linear equation. Therefore, we have the following result.

\begin{theorem}\label{thm:exists}
There exists a family of linear codes $\cC_n$ of length $n$ and rate $1-\h(p)-\epsilon$, that have probability of 
error over BSC($p$) going to $0$ as $n \to \infty$. These codes simultaneously achieve update-efficiency $O(\log n/E_L(p,\epsilon))$ and
local recoverability $O(\log 1/\epsilon)$.
\end{theorem}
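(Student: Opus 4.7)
The plan is to make the construction sketched immediately before the theorem statement do all the work, and simply verify that the four properties (linearity, rate, vanishing error probability over BSC($p$), and the two efficiency conditions) hold for the resulting family $\cC_n = (G_n)$ with $G_n = I_{n/m} \otimes \hat{G}$ and $m = \lceil (1+\alpha)\log n / E_L(p,\epsilon)\rceil$.

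First I would invoke the cited fact that, for every $\epsilon>0$ and all sufficiently large $m$, there exists an LDPC code $\hat{\cC}$ of length $m$ and rate $1-\h(p)-\epsilon$ whose parity-check matrix $\hat{H}$ has row-weight $O(\log 1/\epsilon)$ and whose ML decoding error probability over BSC($p$) is at most $2^{-E_L(p,\epsilon)m}$. Fix any generator matrix $\hat{G}$ of $\hat{\cC}$, and form $G = I_{n/m} \otimes \hat{G}$, so that a codeword of $\cC_n$ is the side-by-side concatenation of $n/m$ independent codewords of $\hat{\cC}$. The rate is preserved: $\cC_n$ has rate equal to that of $\hat{\cC}$, namely $1-\h(p)-\epsilon$. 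Decoding each block separately and applying the union bound,
\begin{equation*}
\Pr[\text{error}] \le \frac{n}{m}\, 2^{-E_L(p,\epsilon)m} = \frac{E_L(p,\epsilon)}{(1+\alpha)\, n^{\alpha}\log n},
\end{equation*}
which tends to $0$ as $n\to\infty$ for any $\alpha>0$.

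Next I would verify update-efficiency and local recoverability. Every row of $G$ is a row of $\hat{G}$ padded with zeros, so its weight is at most $m = O(\log n / E_L(p,\epsilon))$; by Proposition~\ref{prop:linear}, the update-efficiency of $\cC_n$ is therefore $O(\log n / E_L(p,\epsilon))$. For local recoverability, observe that a parity-check matrix of $\cC_n$ is $H = I_{n/m}\otimes \hat{H}$, which is block diagonal with copies of $\hat{H}$; each row of $H$ thus has weight equal to a row-weight of $\hat{H}$, namely $O(\log 1/\epsilon)$. Applying the preceding proposition on local recoverability from low-weight parity checks yields that any codeword symbol can be recovered from $O(\log 1/\epsilon)$ others.

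The only nontrivial ingredient is the existence of an LDPC code with simultaneously (i) rate $1-\h(p)-\epsilon$, (ii) row-weight $O(\log 1/\epsilon)$ in the parity-check matrix, and (iii) a positive error exponent $E_L(p,\epsilon)$; this is the step where one must lean on the classical ensemble analysis (e.g., Gallager \cite{gallager1962low} and the refinements in \cite{miller2001bounds}). Everything else is bookkeeping: choosing $m$ to balance the inner block length against the target error probability, applying the union bound to tensor blocks, and reading off row-weights of $G$ and $H$ from those of $\hat{G}$ and $\hat{H}$. No additional algebraic or combinatorial machinery beyond the Kronecker-product observation is required.
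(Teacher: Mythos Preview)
Your proposal is correct and follows essentially the same argument as the paper: take a short LDPC code $\hat{\cC}$ with the stated exponent and check-degree, tensor its generator with an identity to form $G = I_{n/m}\otimes \hat{G}$, union-bound the block errors, and read off the row-weights of $G$ and of the block-diagonal parity-check matrix $H = I_{n/m}\otimes \hat{H}$. The only point the paper makes slightly more explicit is that each symbol is recovered by solving a single linear equation from the low-weight parity check, but your invocation of the row-weight proposition covers this.
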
 

Hence, it is possible to simultaneously achieve local recovery and update-efficiency
with a capacity-achieving code on BSC($p$). 
But note that, this came with a price: namely, the decay of probability of error has become only 
polynomial as opposed to being exponential. 
A similar result is immediate  for BEC($p$).

\section{Impossibility results for update-efficiency}\label{sec:LDGM}

In this section, we show that for suitably small $\gamma$, no code can simultaneously achieve
capacity  and have update-efficiency
 better than $\gamma \log n$, $n$ blocklength. 
 More precisely, we give  the following converse results. 
 \begin{enumerate}
 \item {\em Linear codes.} Linear codes of positive 
 rate cannot have arbitrarily small probability of error 
 and update-efficiency better than $\gamma_1(p) \log n, \gamma_1(p) >0$  when used over the BEC (Thm.~\ref{thm:converse2}).
 Since a BSC is degraded with respect
to a BEC, this  result implies same claim for BSC as well.
To see that BSC($p$) is a degraded version of a BEC with erasure probability
$2p$, one can just concatenate BEC($2p$) with a channel with ternary input $\{0,1,?\}$ 
and binary output $\{0,1\}$, such that with probability $1$ the inputs $\{0,1\}$ remain the same, and
with uniform probability $?$ goes to $\{0,1\}$.  

 \item {\em General codes.}
Any (possibly non-linear) code with positive rate cannot have  update-efficiency
better than $\gamma_2(p) \log n, \gamma_2(p) >0$, and vanishing probability of error
when transmitted over BSC.  The value of $\gamma_2(p)$ that we obtain in this
case is  larger than $\gamma_1(p)$ of linear codes; moreover this  result applies to more
general codes than the previous (Thm.~\ref{thm:converse3}). But we have not been able to extend it to the BEC.
It could be interesting to explore whether nonlinear codes of
positive rate must have at least logarithmic update
efficiency for the BEC.

\item {\em LDGM ensemble.} We also show that
for the ensemble of LDGM codes with fixed row-weight $\gamma_3(p) \log n, \gamma_3(p)>0,$
 almost all codes have probability of error $\sim1$ when transmitted over a BSC (Thm.~\ref{thm:converse}).
 The value of $\gamma_3(p)$ in this case is much larger than the previous two cases.
\end{enumerate}

A plot providing the lower bound on update-efficiency of ``good''
codes is presented in Fig.~\ref{fig:converse}. In this figure, the
values of $\alpha$, the constant multiplier of $\ln n$, as a function
of BSC flip probability $p$ is plotted.  The plot contains results of
Theorems~\ref{thm:converse2}, \ref{thm:converse3} and
\ref{thm:converse}.  Note that $\gamma_1(p),\gamma_3(p) \to \infty$ as $p\to 1/2$
for general codes (Theorem~\ref{thm:converse3}) and the LDGM ensemble
(Theorem~\ref{thm:converse}).

\begin{figure}[tbp]
\centerline{\includegraphics[width=3.75in]{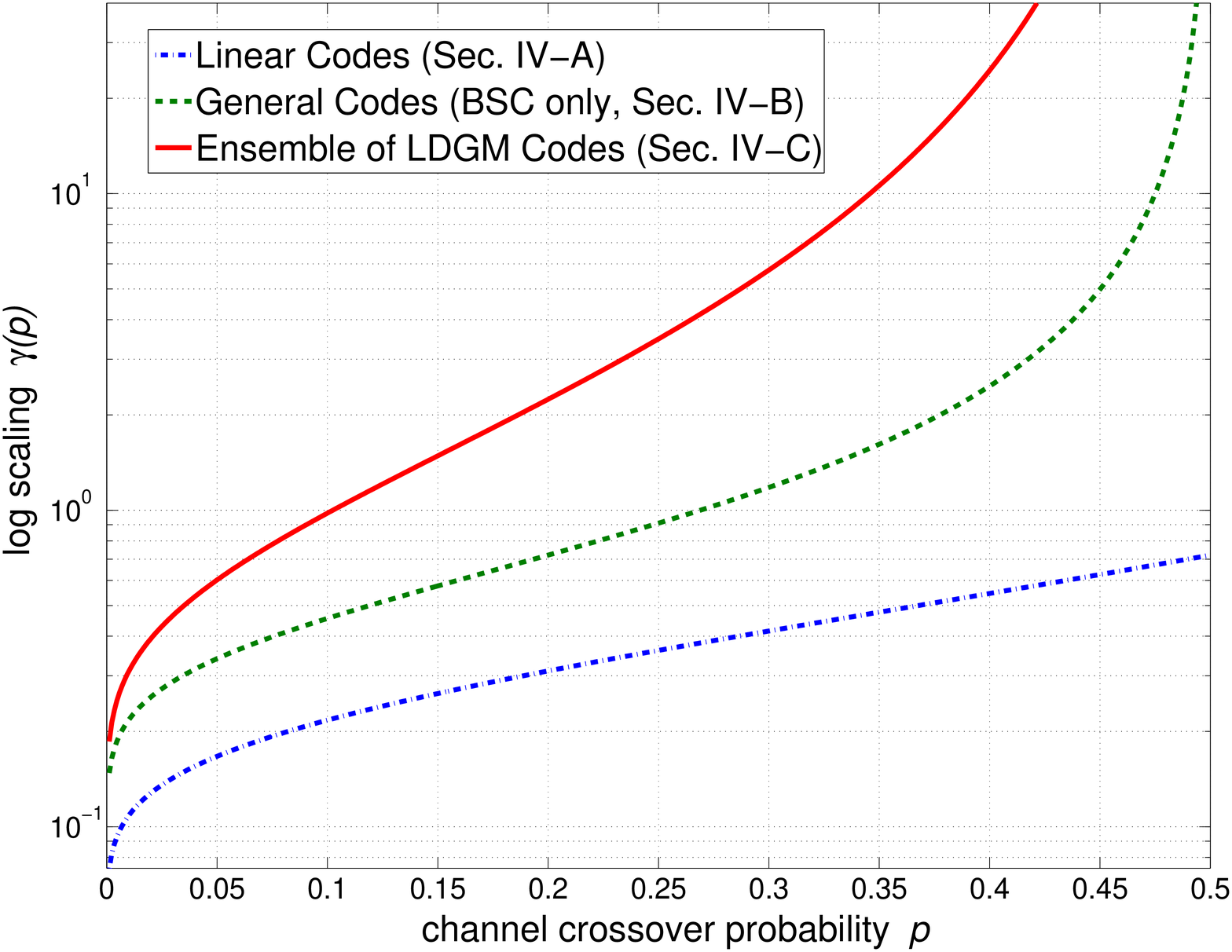}}
\caption{The plot of constant factors of $\ln n$ from Theorems \ref{thm:converse2}, \ref{thm:converse3} and \ref{thm:converse}. The fact that 
  the bound for linear codes appears below the bound for general codes is a artifact of the bounding technique: the bound for general codes
  is not extendable to BEC, but the bound for linear codes
  is. \label{fig:converse}} 
\end{figure}

\subsection{Impossibility result for linear codes}\label{linear}

The converse result for linear codes used over a binary erasure channel is based on the 
observation that 
when the update-efficiency is low, the generator matrix $G$ is very sparse, i.e.,
every row of $G$ has very few non-zero entries. Let the random subset $I\in \{1,\dots,n\}$ denote
the coordinates not erased by the binary erasure channel.
Let $G_I$ denote the submatrix of 
$G$ induced by the unerased received symbols, i.e.,  the columns
of $G$ corresponding to $I$. Then, because $G$ is sparse, it is quite likely
that $G_I$ has several all zero rows, and the presence
of such rows implies a large error probability. We formalize
the argument below.
\begin{theorem}\label{thm:converse2}
Consider using some linear  code  of length $n$, dimension $k$ and update-efficiency
$t$, specified by generator matrix $G$
over BEC($p$). Hence, all rows of $G$ have weight at most $t$. Assume that for some $\epsilon >0$,
\begin{equation*} 
t < \frac{\ln \frac{k^2}{2n\ln(1/\epsilon)}}{2\ln \frac1p}.
\end{equation*}
Then, the average probability of error is at least $1/2- \epsilon$.
\end{theorem}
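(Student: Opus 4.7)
The plan is to show that with high probability over the BEC's erasures, the decoder's effective observation matrix $G_I$ (the restriction of $G$ to the set $I$ of unerased coordinates) fails to have full rank, forcing many equally likely candidate messages. Let $N$ be the number of rows of $G$ whose entire support lies in $[n]\setminus I$. Each such row places the corresponding standard basis vector in the left null space of $G_I$, so $k-\mathrm{rank}(G_I)\ge N$. By linearity and the uniform prior on messages, conditional on $\{N=r\}$ the unerased observations are consistent with exactly $2^r$ equally likely messages, so a short computation gives
\[
P_{\mathrm{err}}\;=\;1-\mathbb{E}\bigl[2^{-(k-\mathrm{rank}(G_I))}\bigr]\;\ge\;1-\mathbb{E}[2^{-N}]\;\ge\;\tfrac{1}{2}\Pr[N\ge 1].
\]
Hence it suffices to prove $\Pr[N=0]\le 2\epsilon$.

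For the first moment, write $N=\sum_{i=1}^kY_i$ with $Y_i=\mathbf 1\{\mathrm{supp}(\mathrm{row}_iG)\subseteq[n]\setminus I\}$; independence of the erasures gives $\Pr[Y_i=1]=p^{w_i}$, so the row-weight constraint $w_i\le t$ yields $\mu:=\mathbb{E}[N]\ge kp^t$. Each event $\{Y_i=1\}$ is a monotone increasing function of the independent Bernoulli$(p)$ erasure bits, so Janson's inequality applies and yields
\[
\Pr[N=0]\;\le\;\exp\!\Bigl(-\tfrac{\mu^2}{2(\mu+\Delta)}\Bigr),\quad \Delta\;:=\sum_{\substack{i\ne j\\N(i)\cap N(j)\ne\emptyset}}p^{|N(i)\cup N(j)|},
\]
where $N(i)$ denotes the support of row $i$.

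The heart of the argument is to bound $\mu+\Delta$ by a constant multiple of $n$. Two rows can correlate only through shared columns, of which there are only $n$ in total. Grouping the sum defining $\Delta$ by the column(s) mediating each overlap, and applying Cauchy--Schwarz to the quantities $T_l:=\sum_{i\colon l\in N(i)}p^{w_i}$ (which satisfy $T_l\le\mu$ and $\sum_l T_l=\sum_i w_ip^{w_i}\le t\mu$), together with the column-degree identity $\sum_l c_l=\sum_i w_i\le kt$, one extracts a bound of order $n$ (with a constant depending on $p$). Plugging this into the Janson estimate and using $\mu\ge kp^t$ makes the exponent at least a constant times $(kp^t)^2/n$, and rearranging the hypothesis $t<\ln\bigl(k^2/(2n\ln(1/\epsilon))\bigr)/(2\ln(1/p))$ into $k^2p^{2t}>2n\ln(1/\epsilon)$ is precisely what forces $\Pr[N=0]\le 2\epsilon$, whence $P_{\mathrm{err}}\ge 1/2-\epsilon$.

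The main obstacle is the $O(n)$ control on the Janson correction $\Delta$. A naive union bound over pairs of rows produces only the much weaker $O(k^2)$ estimate, which is useless in the interesting regime $k=\Theta(n)$. The correct argument must convert ``$\binom{k}{2}$ pairs of rows'' into ``$n$ mediating columns,'' and pinning down the matching constant $1/(2\ln(1/p))$ in the theorem's denominator requires careful bookkeeping in the Cauchy--Schwarz step.
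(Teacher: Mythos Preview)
Your reduction $P_{\mathrm{err}}\ge \tfrac12\Pr[N\ge 1]$ is fine, but the core claim that $\mu+\Delta=O(n)$ is false, and in fact the quantity $\Pr[N=0]$ that you are trying to bound need not be small at all. Take $k=n/2$ and let the rows of $G$ be $e_1,\,e_1+e_2,\,e_1+e_3,\ldots,e_1+e_k$. These are linearly independent with $t=2$, and for $n$ large the theorem's hypothesis $k^2p^{2t}>2n\ln(1/\epsilon)$ holds. Every pair of rows shares column~$1$, so $\Delta=\Theta(k^2)$; no rearrangement via $T_l$ and Cauchy--Schwarz can beat this, since $T_1=\mu$ already and $\Delta\ge \sum_{i\ne j}p^{w_i}p^{w_j}/p=\mu^2/p-\cdots$. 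Worse, $N=0$ is exactly the event ``column~$1$ is unerased,'' so $\Pr[N=0]=1-p$, a constant---your target $\Pr[N=0]\le 2\epsilon$ is simply unattainable here. The point is that $N$ can be extremely correlated across rows, and Janson's $\Delta$ term faithfully reports this.

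The paper's proof sidesteps the correlation issue entirely by applying concentration not to $N$ but to $\rank(G_I)$ itself. Whatever the structure of $G$, adding or deleting a single column changes the rank by at most one, so $\rank(G_I)$ is a $1$-Lipschitz functional of the $n$ independent erasure indicators. Azuma's inequality then gives
\[
\Pr\bigl(\rank(G_I)\ge \avg\rank(G_I)+\lambda\bigr)<e^{-\lambda^2/(2n)},
\]
and combining with $\avg\rank(G_I)\le k-\avg N\le k-kp^t$ (your first-moment computation) and $\lambda=kp^t$ yields $\Pr[\rank(G_I)=k]<e^{-k^2p^{2t}/(2n)}<\epsilon$ directly. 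In the example above this works because, although $N$ is usually $0$, the rank is still typically about $(1-p)k\ll k$. So the fix is not to repair the Janson step but to change the random variable to which you apply concentration.
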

\begin{IEEEproof}
For linear codes over the binary erasure channel, analyzing the probability of error
essentially reduces to analyzing the probability that the matrix $G_I$ induced by the unerased
columns of $G$ has rank $k$ (note that the rank is computed
over $\ff_2$). To show that the rank is likely to be less
than $k$ for sufficiently small $t$, let us first compute the expected number of all zero
rows of $G_I$. Since every
row of $G$ has weight at most $t$, the expected number of all zero rows of $G_I$
is at least
$kp^t$. The rank of $G_I$, $\rank(G_I)$, is at most $k$ minus the number of all zero rows, 
so the expected
rank of $G_I$ is at most
$k-kp^t$.

Now, observe that the rank is a $1$-Lipschitz functional of the independent 
random variables 
denoting the erasures introduced by the channel. 
Therefore, by Azuma's
inequality \cite[Theorem~7.4.2]{AS2000}, the rank of $G_I$ satisfies 
\begin{equation*} 
\Pr(\rank(G_I) \ge \avg \rank(G_I)+\lambda) <
e^{-\frac{\lambda^2}{2n}}.
\end{equation*}
Therefore, 
\begin{equation*}
\Pr(\rank(G_I)  \ge k-kp^t+\lambda) < e^{-\frac{\lambda^2}{2n}}.
\end{equation*}
In particular, substituting $\lambda = kp^t$,
\begin{equation*}
\Pr(\rank(G_I)  = k) < e^{-\frac{k^2p^{2t}}{2n}}.
\end{equation*}
Assuming 
 the value given for $t$, we see that
\begin{equation*}
\Pr(\rank(G_I)  = k) <\epsilon. 
\end{equation*}
Since even the maximum likelihood decoder makes an error with probability
at least $0.5$ when $\rank(G_I) <k$, this shows that when 
\begin{equation*} 
t < \frac{\ln \frac{k^2}{2n\ln(1/\epsilon)}}{2\ln \frac1p},
\end{equation*}
the probability
of error is at least $1/2-\epsilon$. (In fact, the average error probability
converges to 1. The above argument can easily be extended
to show that the probability of decoding successfully
is at most $e^{-\Omega(k^\delta/\log k)}$ for some $\delta >0$, but
we omit the details.)
\end{IEEEproof}

\subsection{Impossibility for general codes}\label{sec:con_general}

Now, we prove that even nonlinear codes cannot have low update-efficiency for the binary symmetric channel.
The argument is based on a simple observation. If a code has dimension $k$ and update-efficiency $t$,
then any given codeword has $k$ neighboring codewords within distance $t$, corresponding to the $k$
possible single-bit changes to the information bits. If $t$ is sufficiently small, it is not possible to 
pack $k+1$ codewords into a Hamming ball of radius $t$ and maintain a low probability of error. 
\begin{theorem}\label{thm:converse3}
Consider using some (possibly non-linear) code of length $n$, size $2^k, k \in \reals_{+}$, and update-efficiency 
$t$ over BSC($p$). Assume that
$t \leq (1-\alpha)\log k/\log((1-p)/p)$, for some $\alpha >0$.
Then, the average probability of error is at least $1-o(1)$, where
$o(1)$ denotes a quantity that goes to zero as $k \rightarrow \infty$.
\end{theorem}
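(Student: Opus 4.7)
The plan is to formalize the packing argument suggested in the paragraph preceding the theorem. A code with update-efficiency $t$ must place at least $k$ other codewords within Hamming distance $t$ of every codeword $\bfc$, and for BSC$(p)$ with such small $t$, ML decoding cannot disambiguate $\bfc$ from its cloud of competitors.

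Fix $\bfc$ and let $\bfc_1,\ldots,\bfc_k$ denote the $k$ distinct codewords corresponding to single-bit flips of the message. Set $\bfd_i=\bfc\oplus\bfc_i$ and $S_i=\supp(\bfd_i)$, so $1\le|S_i|\le t$. If $\bfc$ is transmitted with noise $\bfe$, comparing Hamming distances gives $\dist(\bfy,\bfc_i)-\dist(\bfy,\bfc)=|S_i|-2|\supp(\bfe)\cap S_i|$, so ML decoding prefers $\bfc_i$ to $\bfc$ exactly when $|\supp(\bfe)\cap S_i|\ge|S_i|/2$. Writing $X_i$ for the indicator of that event, ML correctness on $\bfc$ forces $\sum_iX_i=0$, and the theorem reduces to showing $\Pr\bigl[\sum_iX_i\ge1\bigr]\to1$.

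Next I would estimate the first moment. The central-binomial bound $\Pr[B(s,p)\ge s/2]\ge c(4p(1-p))^{s/2}/\sqrt{s}$, combined with the fact that this quantity is asymptotically decreasing in $s$ for $p<1/2$, gives $\avg X_i\ge c'(4p(1-p))^{t/2}/\sqrt{t}$ uniformly over $|S_i|\le t$. The key algebraic observation is the elementary inequality
\[
\kappa(p):=\frac{\log\bigl[1/(4p(1-p))\bigr]}{2\log((1-p)/p)}<\tfrac12\qquad\text{for every }p\in(0,\tfrac12),
\]
which is equivalent to $(1-p)^2>1/4$. Plugging the hypothesis $t\log((1-p)/p)\le(1-\alpha)\log k$ into the exponent and applying $\kappa(p)<1/2$ yields $(4p(1-p))^{t/2}\ge k^{-(1-\alpha)\kappa(p)}\ge k^{-(1-\alpha)/2}$, and hence $\avg\sum_iX_i=\Omega\bigl(k^{(1+\alpha)/2}/\sqrt{\log k}\bigr)\to\infty$.

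The main obstacle is upgrading this diverging first moment into $\Pr[\sum_iX_i\ge1]\to1$, since the $X_i$'s are positively correlated (by FKG) whenever the $S_i$'s overlap, so a naive Chebyshev-type bound is not sharp. I would resolve this through a dichotomy on $|\bigcup_iS_i|$. When this union is small, the sub-code $\cC_0=\{\bfc,\bfc_1,\ldots,\bfc_k\}$ is effectively supported on few coordinates and has rate strictly above $1-\h(p)$; this uses the inequality $\log((1-p)/p)>1-\h(p)$ for all $p\in(0,\tfrac12)$, which leaves enough slack given the $\alpha$ in the hypothesis, and the strong converse of channel coding forces $P_e^{\cC_0}\to1$, hence $P_e^{\cC}\ge P_e^{\cC_0}\to1$. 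When the union is large, a greedy argument extracts an unboundedly growing sub-collection of pairwise-disjoint $S_i$'s on which the corresponding $X_i$'s are independent, and a Chernoff bound applied to the divergent first moment gives the conclusion. Matching these two regimes quantitatively---so that one argument or the other always applies to a worst-case code structure---is the delicate technical step.
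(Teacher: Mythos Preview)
Your first-moment computation and the inequality $\kappa(p)<1/2$ are both correct; the difficulty is entirely in upgrading to $\Pr\bigl[\sum_i X_i\ge 1\bigr]\to 1$, and the proposed dichotomy does not close it. In the ``large union'' branch you claim a greedy extraction of many pairwise-disjoint $S_i$, but a large $\bigl|\bigcup_i S_i\bigr|$ does not force this. For instance, let every $S_i$ contain coordinate~$1$ (which is exactly what happens when the first encoded bit is the overall parity $x_1\oplus\cdots\oplus x_k$); then no two $S_i$'s are disjoint regardless of how large the union is, and greedy returns a single set. One can try to condition on the noise at shared coordinates and recurse, but carrying this out against an adversarial configuration of the $S_i$ is precisely the ``delicate technical step'' you defer, and it is not clear it can be completed with the stated parameters. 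In the ``small union'' branch there is a second issue: the strong converse controls the \emph{average} error of $\cC_0$ over its $k+1$ members, not the error conditional on the specific center $\bfc$ being sent, and the asserted inequality $P_e^{\cC}\ge P_e^{\cC_0}$ compares averages over different codebooks and is not valid as written.

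The paper sidesteps the correlation problem with a one-line volume bound. Translate so that the $k+1$ codewords of $\cC_0$ lie in the radius-$t$ ball about $\boldsymbol{0}$, and condition on the BSC introducing exactly $w$ errors. Every possible channel output then has weight between $w-t$ and $w+t$, so there are at most $(2t+1)\binom{n}{w+t}$ of them; since the ML decoding regions are disjoint, the average probability of correct decoding over $\cC_0$ is at most
\[
\frac{(2t+1)\binom{n}{w+t}}{(k+1)\binom{n}{w}}\ \le\ \frac{2t}{k}\Bigl(\frac{1-p}{p}\Bigr)^{t}\ \le\ \frac{2t}{k^{\alpha}}
\]
under the hypothesis on $t$. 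No independence, no second moment, no dichotomy. The passage from the subcode errors $P_{\cC_\bfx}$ back to $P_e^{\cC}$ is then a short double-counting over all centers $\bfx$, using that each codeword appears in at most $k+1$ (the paper uses the cruder bound $n$) of the subcodes $\cC_\bfx$.
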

\begin{IEEEproof}
First, we show that a code consisting of $k+1$ codewords contained in a Hamming ball
of radius $t$ has large probability of error. Instead of analyzing BSC($p$),
consider the closely related channel where
exactly $w$ uniformly random errors are introduced, where $w +t  \le n/2$. For this channel, subject to the constraint
that the $k+1$ codewords are contained in a Hamming ball of radius $t$, the
average probability of error is at least
\begin{equation*}
1-\frac{(2t+1){n \choose w+t}}{(k+1){n \choose w}} \geq
1-\frac{2t(n-w)^t}{kw^t}.
\end{equation*}
To see this, take $\bfx_1, \dots,\bfx_{k+1}$ to be the codewords, and $B_i, i =1,\dots, k+1,$
to be the corresponding decoding regions. Without loss of generality, we can 
assume that 1) the ML decoder is detetrministic, so the $B_i$'s are all disjoint, and 2) the codewords 
are all contained in a Hamming ball centered at the zero vector.
Now, let $D_i$ be the set of possible outputs of the channel for input $\bfx_i, i =1, \dots, k+1.$
The average probability of correct decoding is 
\begin{equation*}
 \frac1{k+1}\sum_{i=1}^{k+1}\frac{|B_i\cap D_i|}{\binom{n}{w}}
= \frac1{k+1}\frac{|\cup_i(B_i\cap D_i)|}{\binom{n}{w}} \le  \frac{|\cup_i D_i |}{(k+1)\binom{n}{w}} .
\end{equation*}
But 
$$|\cup_i D_i| \le \sum_{j=-t}^t \binom{n}{w+j} \le (2t+1)\binom{n}{w+t}.$$
The first inequality follows because an erroneous vector can have weight at least $w-t$ and at most $w+t$.
The second inequality follows because $\binom{n}{i}$ increases with $i\le n/2.$ 

Now, with probability at least $1-o(1)$, the number of errors introduced by the  the binary symmetric channel is at least $pn-n^{2/3}$
and at most $pn+n^{2/3}$, 
and the error vectors are uniformly distributed\footnote{This, of course, follows from standard large deviation inequalities such 
as Chernoff bound. For example, see textbook \cite[Thm. 3.1.2]{cover2012elements}.}. 

If $t \leq (1-\alpha)\log k/\log((1-p)/p)$, then for $p <\frac12$, $pn +n^{2/3}+ t < n/2$, for every sufficiently large $n$. And, therefore, the probability of
error on the binary symmetric channel is at least
\begin{equation*}
1-\frac{2t(1-p)^t}{kp^t}+o(1) = 1-
2t/k^\alpha+o(1).
\end{equation*}

Now, for each message $\bfx$ of the given $(n,k)$ code with update-efficiency
 $t$, consider the subcode
$\cC_\bfx$ consisting
of the $k+1$ codewords $\phi(\bfx),\phi(\bfx+\bfe_1),\ldots,\phi(\bfx+\bfe_k)$, corresponding to the 
encodings of $\bfx$ and the $k$ messages
obtained by changing a single bit of $\bfx$. These codewords lie within
a Hamming ball of radius $t$ centered around $\phi(\bfx)$. The 
above argument shows that even a maximum likelihood decoder has a large average probability of error
for decoding the subcode $\cC_\bfx$. Let us call this probability $P_{\cC_\bfx}$. We claim that the average probability of error
of the code $\cC$ with maximum likelihood decoding, $P_{\cC}$, is at least the average, over all
$\bfx$, of the probability
of error for the code $\cC_\bfx$, up to some factor.  In particular,
\begin{equation*} 
P_{\cC} \ge \frac{k}{n}\frac{1}{|\cC|}\sum_{\bfx\in \cC}  P_{\cC_\bfx}.
\end{equation*}
We will now prove this claim and thus the theorem. Note that $P_{\cC}
=  1/|\cC|\sum_{\bfx\in \cC} P_\bfx$, where $P_\bfx$ is the  
probability of error if codeword $\bfx$ is transmitted. Therefore,
\begin{equation*} 
P_{\cC_\bfx} =  \frac{1}{|\cC_\bfx|}\sum_{\bfy\in \cC_\bfx} P_\bfy.
\end{equation*}
We have, 
\begin{align*}
\frac{1}{|\cC|}\sum_{\bfx\in \cC}  P_{\cC_\bfx} & \le \frac{1}{|\cC|}\sum_{\bfx\in \cC} \frac{1}{|\cC_\bfx|}\sum_{\bfy\in \cC_\bfx} P_\bfy\\
&= \frac{1}{(k+1)|\cC|}\sum_{\bfx\in \cC} \sum_{\bfy\in \cC_\bfx} P_\bfy\\
&= \frac{1}{|\cC|} \sum_{\bfx\in \cC} \frac{ d_\bfx}{k+1} P_\bfx,
\end{align*}
where  $d_\bfx =|\{\bfy: \bfx \in \cC_\bfy\}| \le n$. Hence,
\begin{equation*} 
\frac{1}{|\cC|}\sum_{\bfx\in \cC}  P_{\cC_\bfx}  \le \frac{1}{|\cC|} \sum_{\bfx\in \cC} \frac{n}{k+1} P_\bfx = \frac{n}{k+1}P_{\cC}.
\end{equation*}
We conclude that the original code $\cC$ has probability of error
at least $1-o(1)$ when
\begin{equation*} 
t \leq \frac{(1-\alpha)\log k}{\log(\frac{1-p}{p})}.
\end{equation*}
\end{IEEEproof}

\begin{remark} This argument does not work for the binary erasure channel. In fact,
there exist zero rate codes for the binary erasure channel with vanishing error probability
and sub-logarithmic update-efficiency. Specifically, consider an encoding from $k$ bits to $2^k$ bits
that maps a message $\bfx$ to the string consisting of all $0$'s except for a single $1$ in the position 
with binary expansion $\bfx$. Repeat every symbol of this string $c$ times to obtain the final encoding
$\phi(\bfx)$. The update-efficiency
is $2c$, since every codeword has exactly $c$ $1$'s, and different codewords
never have a nonzero entry in the same position. Since the location
of a nonzero symbol uniquely identifies the message, the error probability is at most the 
probability that all $c$ $1$'s in the transmitted codeword are erased, i.e., at most $p^c$. 
Therefore, we achieve vanishing error probability as long as $c\rightarrow
\infty$, and $c$ can grow arbitrarily slowly. 

We conjecture that for positive rates,
even nonlinear codes must have logarithmic
update complexity for the binary erasure channel.
\end{remark}

\subsection{Ensemble of LDGM codes}\label{ensemble}
Let us motivate the study of one particular ensemble of LDGM codes
here.  Suppose we want to construct a code with update-efficiency $t$.
From proposition \ref{prop:linear}, we know that a linear code with
update-efficiency $t$ always has a generator matrix with maximum row
weight $t$.  For simplicity we consider generator matrices with
all rows having weight exactly $t$.  We look at the ensemble of
linear codes with such generator matrices, and show that almost all
codes in this ensemble are bad for $t$ less than certain value.  Note
that any $k \times n$ generator matrix with row weight at most $t$ can
be extended to a generator matrix with block-length $n+t-1$ and row
weight exactly $t$ (by simply padding necessary bits in the last $t-1$
columns).


Let $\Gamma_{n,k,t}$ be the set of all $k \times n$ matrices over $\ff_2$ such that each
 row has exactly $t$ ones. First of all, we claim that almost all the matrices in $\Gamma_{n,k,t}$
 generate codes with dimension $k$ (i.e., the rank of the matrix is $k$). Indeed, we quote the 
 following lemma from \cite{calkin1997dependent}.
\begin{lemma}\label{lem:rank}
Randomly and uniformly choose a matrix $G$ from $\Gamma_{n,k,t}$. 
If $k\le \Big(1-e^{-t}/\ln{2}-o(e^{-t})\Big)n,$ then with probability $1-o(1)$ the rank of $G$
is $k$.
\end{lemma}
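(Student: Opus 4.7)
The plan is to apply a first moment argument to the number of linearly dependent subsets of rows. A matrix $G \in \Gamma_{n,k,t}$ has rank less than $k$ exactly when some nonempty subset of its rows sums to zero over $\ff_2$; since every row has weight $t \ge 1$, no single row vanishes, so letting $X_\ell$ count the $\ell$-subsets of rows summing to zero,
\[
\Pr(\rank G < k) \;\le\; \sum_{\ell=2}^k \avg X_\ell \;=\; \sum_{\ell=2}^k \binom{k}{\ell} q_\ell,
\]
where $q_\ell$ is the probability that $\ell$ independent uniform weight-$t$ vectors in $\ff_2^n$ sum to zero (uniformity on $\Gamma_{n,k,t}$ makes the rows independent, each uniform on the weight-$t$ sphere). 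The aim is to show the right-hand side is $o(1)$ under the stated hypothesis on $k/n$.

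The probability $q_\ell$ I would compute via Fourier analysis on $\ff_2^n$. Expanding $\mathbf{1}\{v=0\} = 2^{-n}\sum_u (-1)^{u\cdot v}$ and exchanging sums gives
\[
q_\ell \;=\; \frac{1}{2^n}\sum_{x=0}^n \binom{n}{x}\,\rho_t(x)^\ell, \qquad \rho_t(x) \;=\; \frac{K_t(x)}{\binom{n}{t}},
\]
where $K_t$ is the degree-$t$ Krawtchouk polynomial. Equivalently, $\rho_t(x) = \avg[(-1)^H]$ with $H$ hypergeometric of parameters $(n,x,t)$, so $|\rho_t(x)|\le 1$, with $\rho_t(0)=1$, $\rho_t(n)=(-1)^t$, and $|\rho_t(x)|<1$ strictly in between. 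Quantitatively, the generating identity $\binom{n}{t}\rho_t(x)=[z^t](1-z)^x(1+z)^{n-x}$ yields $\rho_t(x)^2 \le (1-4x(n-x)/n^2)^t + O(t^2/n)$, so $\rho_t(x)^\ell$ decays geometrically in $t\ell$ away from the endpoints $x\in\{0,n\}$.

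Splitting the $x$-sum for $q_\ell$ into the endpoints (contributing $(1+(-1)^{t\ell})/2^n$) and the bulk (handled by the Krawtchouk bound), and combining with $\binom{k}{\ell} \le 2^{k\h(\ell/k)}$, reduces the whole estimate to a one-parameter optimization over $\ell$. The dominant contribution arises when $\ell$ is of order $n\,e^{-t}$, and matching this against the binomial coefficient shows that $\sum_{\ell \ge 2}\binom{k}{\ell}q_\ell = o(1)$ precisely when $k/n < 1-e^{-t}/\ln 2-o(e^{-t})$. The hardest step is obtaining the sharp constant $1/\ln 2$ in this threshold: naive bounds on $\rho_t(x)^\ell$ reproduce the right order of magnitude $e^{-t}$ but with a suboptimal prefactor, so to match the statement one must perform a saddle-point (Laplace) evaluation of $\sum_x \binom{n}{x}\rho_t(x)^\ell$, balancing the Krawtchouk decay against the Gaussian bulk of $\binom{n}{x}/2^n$; the $1/\ln 2$ enters exactly when converting the natural-logarithm exponent from that saddle-point estimate back into the base-$2$ binomial-entropy scale of $\binom{k}{\ell}$. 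Once that estimate is in hand, the first-moment bound immediately gives $\Pr(\rank G = k) \ge 1 - o(1)$.
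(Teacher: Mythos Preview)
The paper does not prove this lemma: it is quoted verbatim from Calkin's paper \cite{calkin1997dependent}, with no argument supplied here. So there is no ``paper's own proof'' to compare against beyond that citation.

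That said, your plan is essentially the method Calkin uses. He, too, bounds $\Pr(\rank G<k)$ by the expected number of nonempty zero-sum subsets of rows, expresses the summand via the Krawtchouk identity $q_\ell = 2^{-n}\sum_x \binom{n}{x}\big(K_t(x)/\binom{n}{t}\big)^\ell$, and then carries out a careful asymptotic analysis of this double sum to locate the threshold $1-e^{-t}/\ln 2$. Your sketch correctly identifies all the structural pieces and the place where the constant $1/\ln 2$ enters.

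The one caution is that your write-up treats the saddle-point estimate as a black box (``once that estimate is in hand\ldots''), and this is precisely where the entire difficulty lies. The crude pointwise bound you state for $\rho_t(x)^2$ is not by itself sharp enough to produce the constant $1/\ln 2$; Calkin's analysis requires a more delicate estimate of the Krawtchouk polynomial near its extremal arguments and a genuine optimization over both $x$ and $\ell$ simultaneously. If you intend to supply a self-contained proof rather than cite Calkin, that step needs to be written out in full; as it stands, your proposal is a correct outline but not yet a proof of the sharp threshold.
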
 
This lemma, along with the next theorem, which is the main result of this section,
will show the fact claimed at the start of this section.

\begin{theorem}\label{thm:converse}
Fix an $0<\alpha<1/2$.  For at least a $1-t^2n^{2\alpha}/(n-t)$
proportion of the matrices in $\Gamma_{n,k,t}, k \ge n^\alpha,$ the
corresponding linear code has probability of error at least $n^\alpha
2^{-\lambda_p t}/\sqrt{t}$ over a BSC($p$), for $p <1/2$ and
$\lambda_p = -1-1/2\log p - 1/2 \log(1-p) >0$.
\end{theorem}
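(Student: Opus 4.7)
The plan is to locate, with high probability over a uniformly random $G\in\Gamma_{n,k,t}$, a set of $k'=n^\alpha$ rows whose supports are pairwise disjoint. These rows are codewords of weight exactly $t$, and the disjointness of supports lets me promote a single-pair confusion probability of roughly $2^{-\lambda_p t}/\sqrt{t}$ into an overall error probability of roughly $k'\cdot 2^{-\lambda_p t}/\sqrt t$ via an independence argument. By linearity and BSC symmetry, I can condition on the zero codeword being transmitted throughout.

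For the combinatorial step, I would argue as follows. Under the uniform distribution on $\Gamma_{n,k,t}$, the rows are independent, each uniform on the $\binom{n}{t}$ weight-$t$ vectors in $\ff_2^n$. For any two of them,
\[
\Pr\!\bigl(\supp(\bfg_i)\cap\supp(\bfg_j)\ne\emptyset\bigr)\;=\;1-\frac{\binom{n-t}{t}}{\binom{n}{t}}\;\le\;1-\Bigl(1-\tfrac{t}{n-t}\Bigr)^{t}\;\le\;\frac{t^2}{n-t},
\]
by the standard inequality $(1-x)^t\ge 1-tx$. Restricting attention to a fixed subset of $k'=n^\alpha$ rows and union bounding over the $\binom{k'}{2}\le n^{2\alpha}$ pairs shows that with probability at least $1-t^2 n^{2\alpha}/(n-t)$, those $k'$ rows $\bfg_1,\dots,\bfg_{k'}$ have pairwise disjoint supports. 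This exactly matches the fraction claimed in the theorem.

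For the channel step, fix such a ``good'' $G$ and condition on the all-zero codeword being transmitted. For each $i\le k'$, let $A_i$ be the event that strictly more than $t/2$ of the channel flips fall inside $\supp(\bfg_i)$; on $A_i$ the received vector is strictly closer to $\bfg_i$ than to the zero codeword, hence the ML decoder errs. Since the $\supp(\bfg_i)$ are disjoint and BSC$(p)$ acts independently on each coordinate, $A_1,\dots,A_{k'}$ are mutually independent with common probability $q\ge\Pr(\mathrm{Bin}(t,p)>t/2)$. A central-binomial lower bound together with Stirling gives
\[
q\;\ge\;c\cdot\binom{t}{\lceil t/2\rceil}p^{\lceil t/2\rceil}(1-p)^{\lfloor t/2\rfloor}\;=\;\Theta\!\left(\frac{(2\sqrt{p(1-p)})^{t}}{\sqrt{t}}\right)\;=\;\Theta\!\left(\frac{2^{-\lambda_p t}}{\sqrt{t}}\right),
\]
using $2\sqrt{p(1-p)}=2^{-\lambda_p}$. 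Finally, independence gives $P_e\ge 1-(1-q)^{k'}\ge 1-e^{-k'q}$, and the elementary inequality $1-e^{-x}\ge \min(x/2,\,1/2)$ yields $P_e\ge n^\alpha\,2^{-\lambda_p t}/\sqrt{t}$, after absorbing the Stirling constant.

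The main obstacle is tracking the absolute constants tightly enough so that the Stirling lower bound on $q$, combined with the two-regime analysis of $1-(1-q)^{k'}$, produces precisely the claimed bound $n^\alpha 2^{-\lambda_p t}/\sqrt{t}$ rather than merely the correct order of magnitude; in particular, the statement is only informative when this quantity is $\le 1/2$, so the regime $k'q\gtrsim 1$ must be handled by noting that then the bound is trivially $\le 1-e^{-1}$. Everything else (the combinatorial estimate for support disjointness and the decoupling of ``confusion'' events across disjoint supports) is structural and requires only union bounds and the independence of BSC flips.
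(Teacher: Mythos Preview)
Your proposal is correct and follows essentially the same approach as the paper: first show via a union bound that $n^\alpha$ of the (independent, uniform weight-$t$) rows have pairwise disjoint supports with probability at least $1-t^2 n^{2\alpha}/(n-t)$, then use those rows as weight-$t$ codewords whose ``confusion'' events with the all-zero codeword are independent under BSC$(p)$, each occurring with probability at least $\binom{t}{t/2}p^{t/2}(1-p)^{t/2}\ge 2^{-\lambda_p t}/\sqrt t$, and combine via $1-(1-q)^{n^\alpha}$. Your concern about the Stirling constant and the regime $n^\alpha q\gtrsim 1$ is apt; the paper handles both points loosely, writing the final expression as $(1-o(1))\cdot n^\alpha 2^{-\lambda_p t}/\sqrt t$ and implicitly assuming the bound is $o(1)$.
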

The proof of this theorem is deferred
until later in this section. 
This theorem implies that for any $\alpha <1/2$, most codes in
the random ensemble
of codes with fixed row-weight (and hence update-efficiency)
$t<\alpha/\lambda_p \log n$ have  
probability of error bounded away from $0$ for any positive rate.
Indeed, we have the following corollary.
\begin{corollary}\label{cor:main}
For at least $1-o(1)$ proportion of all linear codes with fixed $t$-row-weight generator matrix,
  $t< (\alpha/\lambda_p) \log n$, $\alpha<\frac12$,  
and dimension $k >n^\alpha$, the probability of error is $1-o(1)$
over a BSC($p$), for $0< p \le 1/2$. 
 \end{corollary}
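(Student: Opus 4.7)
I would derive Corollary \ref{cor:main} directly from Theorem \ref{thm:converse} combined with Lemma \ref{lem:rank}, inserting a small constant slack in the exponent so that the theorem's quantitative lower bound on the error probability becomes $\omega(1)$ and therefore forces the actual error probability to be $1-o(1)$.

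First, I would invoke Lemma \ref{lem:rank}. Since the hypothesized row weight $t$ satisfies $t=\Theta(\log n)$, we have $e^{-t}=o(1)$, so the hypothesis $k\le(1-e^{-t}/\ln 2-o(e^{-t}))n$ is met automatically for any $k\le n$, and hence a $1-o(1)$ fraction of matrices in $\Gamma_{n,k,t}$ actually have full row rank $k$ and generate a genuine $k$-dimensional linear code. Next, the strict inequality $t<(\alpha/\lambda_p)\log n$ with fixed $\alpha<\tfrac12$ lets me insert a constant gap: choose an $\alpha_1$ (independent of $n$) with $\lambda_p t/\log n<\alpha_1<\alpha$ and with $\alpha_1-\lambda_p t/\log n$ bounded below by a positive constant. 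Because $k>n^{\alpha}\ge n^{\alpha_1}$, Theorem \ref{thm:converse} applies with its parameter taken to be $\alpha_1$, so outside of a $t^2n^{2\alpha_1}/(n-t)=O((\log n)^2 n^{2\alpha_1-1})=o(1)$ fraction of matrices in $\Gamma_{n,k,t}$, the resulting code has error probability at least
\[
n^{\alpha_1}\,2^{-\lambda_p t}/\sqrt{t}\;\ge\; n^{\,\alpha_1-\lambda_p t/\log n}/\sqrt{t},
\]
which grows like a positive power of $n$.

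Intersecting the two $1-o(1)$ fractions (from Lemma \ref{lem:rank} and from Theorem \ref{thm:converse}) yields the $1-o(1)$ proportion of codes claimed in the corollary. The remaining step is to read the quantitative lower bound going without bound as an $\omega(1)$ expected count of decoding-failure events, which by concentration translates into error probability $1-o(1)$ rather than a vacuous ``exceeds $1$'' bound. I expect the main obstacle to be exactly this final translation: Theorem \ref{thm:converse} as stated delivers only a formal lower bound that may exceed $1$, so certifying that the underlying expectation of confusable-codeword events does diverge \emph{and} concentrates requires a small additional argument on top of pure arithmetic. Once that is granted, everything else follows mechanically from the hypotheses $\alpha<\tfrac12$ and $t<(\alpha/\lambda_p)\log n$.
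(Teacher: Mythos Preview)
Your approach coincides with the paper's: invoke Lemma~\ref{lem:rank} to pass from $\Gamma_{n,k,t}$ to genuine dimension-$k$ codes, then feed the hypothesis on $t$ into the bound of Theorem~\ref{thm:converse}. The paper's proof is literally two sentences and simply says ``plug in $t$''; you are more scrupulous in noticing that the resulting lower bound on the error probability may formally exceed~$1$.

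Two comments on your proposed resolution. First, the mechanism that converts a diverging count into error probability $1-o(1)$ is not concentration. In the proof of Theorem~\ref{thm:converse} the $n^\alpha$ weight-$t$ codewords have pairwise \emph{disjoint} supports, so the events ``the received word is closer to $\bfx_i$ than to the transmitted all-zero word'' are \emph{independent}; the bound actually derived there is $1-(1-x)^{n^{\alpha_1}}$ with $x=2^{-\lambda_p t}/\sqrt{t}$, which is $1-o(1)$ as soon as $n^{\alpha_1}x\to\infty$. No second-moment or martingale step is required---you should simply cite the product form from that proof rather than the simplified statement of the theorem. Second, the bare strict inequality $t<(\alpha/\lambda_p)\log n$ does not by itself force $\lambda_p t/\log n$ to stay a positive constant below $\alpha$ (consider $t=\lfloor(\alpha/\lambda_p)\log n\rfloor$), so your ``$\alpha_1$ independent of $n$ with $\alpha_1-\lambda_p t/\log n$ bounded below by a positive constant'' need not exist as stated; the paper's brief proof glosses over the same point. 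A side remark: your appeal to Lemma~\ref{lem:rank} assumes $t=\Theta(\log n)$, whereas the corollary only upper-bounds $t$, so for small $t$ one additionally needs $k/n$ bounded away from $1$ for that lemma's hypothesis to hold.
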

In particular, this shows that almost all linear codes with fixed row weight $t <
1/(2\lambda_p)\log n$ and rate greater than $1/\sqrt{n}$ are 
 bad (result in high probability of error).
\begin{IEEEproof}[Proof of Corollary \ref{cor:main}]
From Lemma \ref{lem:rank}, it is clear that a $1-o(1)$
proportion of all codes in $\Gamma_{n,k,t}$ have rank $k.$
Hence, if a $1-o(1)$ proportion of codes in $\Gamma_{n,k,t}$ have 
some property, a $1-o(1)$ proportion of codes with  $t$-row-weight generator matrix and
dimension $k$ also have that property.

Now, plugging in the value of $t$ in the expression for probability of
error in Theorem \ref{thm:converse}, we obtain the corollary.
\end{IEEEproof}

To prove Theorem \ref{thm:converse}, we will need the following series of
lemmas.

\begin{lemma}\label{lem:prob1}
Let $\bfx\in \{0,1\}^n$ be a vector of weight $t.$ Let the all-zero
vector of length $n$ be transmitted over a BSC with flip probability
$p<1/2$. If the received vector is $\bfy$, then 
\begin{equation*} 
\Pr(\wt(\bfy) > \dist(\bfx,\bfy)) \ge \frac{1}{\sqrt{t}}2^{-\lambda_p t},
\end{equation*}
where $\lambda_p = -1-1/2\log p - 1/2 \log(1-p) >0.$
\end{lemma}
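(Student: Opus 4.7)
My plan is to reduce the stated probability to a one-sided binomial tail and then lower bound that tail by a single Stirling-estimated term of the probability mass function, matching the exponential rate $2^{-\lambda_p t}$.

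First, I would decompose the channel noise according to the support of $\bfx$. Let $A$ denote the number of bit flips occurring inside $\supp(\bfx)$, so that $A \sim \mathrm{Bin}(t,p)$, and let $B$ denote the number of flips outside, independent of $A$. Because the all-zero codeword is transmitted, $\wt(\bfy)=A+B$. On $\supp(\bfx)$ the vector $\bfy$ agrees with $\bfx$ exactly at the flipped positions, while off the support it agrees at the non-flipped positions, so $\dist(\bfx,\bfy)=(t-A)+B$. The event $\wt(\bfy) > \dist(\bfx,\bfy)$ is therefore equivalent to $2A>t$, i.e.\ $A>t/2$. In particular $B$ drops out entirely and the problem reduces to showing
$$\Pr(A > t/2) \ge \frac{1}{\sqrt{t}}\,(4p(1-p))^{t/2},$$
where I have used the identity $2^{-\lambda_p t}=(2\sqrt{p(1-p)})^{t}=(4p(1-p))^{t/2}$ coming directly from the definition of $\lambda_p$.

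Next I would lower bound $\Pr(A>t/2)$ by a single term at $k=\lfloor t/2\rfloor+1$, writing
$$\Pr(A>t/2) \;\ge\; \binom{t}{k} p^{k}(1-p)^{t-k}.$$
Factoring $p^{k}(1-p)^{t-k}=(p(1-p))^{t/2}(p/(1-p))^{k-t/2}$ collects the $(4p(1-p))^{t/2}$ factor once we bring in $2^{t}$ from the central binomial coefficient. For the combinatorial factor I would apply a Stirling-type lower bound of the form $\binom{t}{k} \ge c\,2^{tH(k/t)}/\sqrt{t}$; since $k/t\to 1/2$ and $H(1/2)=1$, this yields $\binom{t}{k}\ge c'\,2^{t}/\sqrt{t}$. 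Multiplying the two pieces produces exactly a bound of order $\frac{1}{\sqrt{t}}(4p(1-p))^{t/2}$, as required.

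The main obstacle is tightening the constant in front of $\frac{1}{\sqrt{t}}(4p(1-p))^{t/2}$. A naive single-term Stirling estimate at the central index gives a prefactor like $\sqrt{2/\pi}$, and the residual factor $(p/(1-p))^{k-t/2}<1$ can further degrade the constant, so some care is needed to reach exactly $1/\sqrt{t}$. I would address this either by summing a geometric tail of consecutive terms (the ratio $\Pr(A=j+1)/\Pr(A=j)=\frac{t-j}{j+1}\cdot\frac{p}{1-p}$ is bounded above by $p/(1-p)<1$ for $j\ge t/2$, so the sum is controlled by $(1-p)/(1-2p)$ times the first term) or by invoking a sharper Stirling bound such as $\binom{2m}{m}\ge 4^{m}/\sqrt{\pi m}\,(1-O(1/m))$ and separating the parities of $t$. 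Either route converts the single-term lower bound into the claimed inequality, and the reduction in the first paragraph already captures the conceptual content of the lemma.
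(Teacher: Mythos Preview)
Your approach is essentially identical to the paper's: reduce to the event that more than $t/2$ flips occur on $\supp(\bfx)$, then lower bound the resulting binomial tail by a single central term and apply a Stirling-type estimate $\binom{t}{t/2}\gtrsim 2^t/\sqrt{t}$. The paper is in fact less careful than you are about the prefactor and the parity of $t$, writing simply $\sum_{i>t/2}\binom{t}{i}p^i(1-p)^{t-i} > \binom{t}{t/2}p^{t/2}(1-p)^{t/2} \ge \frac{1}{\sqrt{t}}2^{-\lambda_p t}$ without further comment.
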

\begin{IEEEproof}
Let $I \subset [n]$ be the support of $\bfx$. We have $|I| =t.$
Now, $\wt(\bfy) > \dist(\bfx,\bfy)$ whenever the number of errors introduced
by the BSC in the coordinates $I$ is $>t/2.$
Hence,
\begin{align*}
\Pr(\wt(\bfy) > \dist(\bfx,\bfy))  &=\sum_{i>t/2} \binom{t}{i}p^i(1-p)^{t-i}\\
&\hspace{-0.5in}>  \binom{t}{t/2}p^{t/2}(1-p)^{t-t/2}
\ge \frac{1}{\sqrt{t}}2^{-\lambda_p t}.
\end{align*}
\end{IEEEproof}

\begin{lemma}\label{lem:pair}
Suppose two random vectors $\bfx,\bfy \in \{0,1\}^n$ are chosen
independently and uniformly from the set of all length-$n$ binary
vectors of weight $t$. Then,
\begin{equation*} 
\Pr(\supp(\bfx)\cap \supp(\bfy) = \emptyset) > 1-\frac{t^2}{n-t+1}.
\end{equation*}
\end{lemma}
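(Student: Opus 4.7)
The plan is a direct elementary calculation. First I would condition on $\bfx$; by symmetry the probability does not depend on which particular support $\bfx$ has, so I may as well fix $\supp(\bfx) = S$ with $|S| = t$ and average only over $\bfy$. The event $\supp(\bfx) \cap \supp(\bfy) = \emptyset$ then becomes the event that the uniformly chosen weight-$t$ vector $\bfy$ has its support entirely inside the complement $[n] \setminus S$, a set of size $n-t$. Hence
\begin{equation*}
\Pr(\supp(\bfx) \cap \supp(\bfy) = \emptyset) \;=\; \frac{\binom{n-t}{t}}{\binom{n}{t}} \;=\; \prod_{i=0}^{t-1} \frac{n-t-i}{n-i} \;=\; \prod_{i=0}^{t-1}\left(1 - \frac{t}{n-i}\right).
\end{equation*}

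Next I would lower-bound this product using the standard Weierstrass inequality: for any $a_0, \ldots, a_{t-1} \in [0,1]$, one has $\prod_i (1-a_i) \ge 1 - \sum_i a_i$ (proved by an immediate induction on $t$). Applying this with $a_i = t/(n-i)$ gives
\begin{equation*}
\frac{\binom{n-t}{t}}{\binom{n}{t}} \;\ge\; 1 - \sum_{i=0}^{t-1} \frac{t}{n-i}.
\end{equation*}
Finally, since $n-i$ is smallest when $i = t-1$, every term in the sum is at most $t/(n-t+1)$, so $\sum_{i=0}^{t-1} t/(n-i) \le t^2/(n-t+1)$, which yields the claimed bound $1 - t^2/(n-t+1)$.

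There is no real obstacle here; the whole argument is one conditioning step, one algebraic rearrangement of the hypergeometric probability as a telescoping product, and one application of the Weierstrass product inequality. The only thing worth double-checking is that $n - t + 1 > 0$ (otherwise the stated bound is vacuous or negative), but this is implicit in the regime of interest where $t$ is logarithmic in $n$.
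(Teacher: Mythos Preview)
Your proposal is correct and follows essentially the same route as the paper: both fix $\supp(\bfx)$, write the probability as $\binom{n-t}{t}/\binom{n}{t} = \prod_{i=0}^{t-1}(1 - t/(n-i))$, and then lower-bound the product to reach $1 - t^2/(n-t+1)$. The only cosmetic difference is in the intermediate step --- the paper first replaces every factor by the smallest one to get $(1 - t/(n-t+1))^t$ and then invokes $(1-x)^t \ge 1 - tx$ via an alternating-series argument, whereas you apply the Weierstrass product inequality directly and then bound the resulting sum; both paths land on the identical final bound.
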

\begin{IEEEproof}
The probability in question equals
\begin{align*}
\frac{\binom{n-t}{t}}{\binom{n}{t}} &= \frac{((n-t)!)^2}{(n-2t)!n!}\\
&= \frac{(n-t)(n-t-1)(n-t-2)\dots(n-2t+1)}{n(n-1)(n-2)\dots(n-t+1)}\\
&= \Big(1-\frac{t}{n}\Big)\Big(1-\frac{t}{n-1}\Big)\dots \Big(1-\frac{t}{n-t+1}\Big)\\
&> \Big(1-\frac{t}{n-t+1}\Big)^t \ge 1-\frac{t^2}{n-t+1}.
\end{align*}
In the last step we have truncated the series expansion of $\Big(1-\frac{t}{n-t+1}\Big)^t$
after the first two terms. The inequality will be justified if the terms of the series
are decreasing in absolute value. Let us verify that to conclude the proof. In the
following $X_i$ denote the $i$th term in the series, $0\le i \le t.$
\begin{align*}
 \frac{X_{i+1}}{X_i}= \frac{\binom{t}{i+1}}{\binom{t}{i}}\cdot\frac{t}{n-t+1}= \frac{t-i}{i+1}\cdot \frac{t}{n-t+1} \le 1,
\end{align*}
for all $i \le t-1.$

\end{IEEEproof}
\begin{lemma}\label{lem:union}
Let us choose any $n^{\alpha}, 0<\alpha<1/2,$ random vectors of weight $t$
independently and uniformly from the set of weight-$t$ vectors. Denote the vectors by
$\bfx_i, 1\le i \le n^\alpha.$ Then, 
\begin{equation*} 
\Pr(\forall i\ne j,  \supp(\bfx_j)\cap \supp(\bfx_i) =\emptyset) \ge  1-\frac{t^2n^{2\alpha}}{n-t}.
\end{equation*}
This implies all of the vectors have disjoint supports with probability
at least $1-t^2n^{2\alpha}/(n-t)$.
\end{lemma}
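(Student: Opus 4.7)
The plan is to reduce the statement to the pairwise bound in Lemma~\ref{lem:pair} by a direct union bound over all pairs of the chosen vectors. Concretely, let $A_{ij}$ denote the event $\supp(\bfx_i)\cap\supp(\bfx_j)\ne\emptyset$ for $i<j$, and let $A = \bigcup_{i<j} A_{ij}$. The event we want to lower-bound is $A^c$, so it suffices to show $\Pr(A) \le t^2 n^{2\alpha}/(n-t)$.

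First I would apply Lemma~\ref{lem:pair} to each pair $(\bfx_i,\bfx_j)$ with $i\ne j$, which gives $\Pr(A_{ij}) \le t^2/(n-t+1)$. Since the $\bfx_i$'s are drawn independently and uniformly from the weight-$t$ vectors, this bound applies uniformly to every pair. Then a union bound over the $\binom{n^\alpha}{2}$ unordered pairs yields
\begin{equation*}
\Pr(A) \le \binom{n^\alpha}{2}\cdot \frac{t^2}{n-t+1} \le \frac{n^{2\alpha}}{2}\cdot \frac{t^2}{n-t+1} \le \frac{t^2 n^{2\alpha}}{n-t},
\end{equation*}
where the last inequality uses $2(n-t+1) \ge n-t$ (indeed, it uses something much weaker). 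Taking complements gives the desired inequality.

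There is essentially no obstacle: the only mild point to note is that, strictly speaking, $n^\alpha$ may not be an integer, but the convention stated earlier in the paper (that floors/ceilings are suppressed for clarity when the meaning is unambiguous) handles this. The final sentence of the lemma, that all vectors have pairwise disjoint supports with the claimed probability, is just a rephrasing of $A^c$, so nothing further is needed.
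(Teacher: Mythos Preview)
Your proposal is correct and matches the paper's proof essentially verbatim: the paper also invokes Lemma~\ref{lem:pair} to bound each pairwise overlap probability and then takes a union bound over all $\binom{n^\alpha}{2}$ pairs. Your explicit chain of inequalities to pass from $\binom{n^\alpha}{2}\cdot t^2/(n-t+1)$ to $t^2 n^{2\alpha}/(n-t)$ simply spells out what the paper leaves implicit.
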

\begin{IEEEproof}
Form Lemma \ref{lem:pair}, for any pair of randomly and uniformly chosen vectors, 
the probability that they have overlapping support is at most $t^2/(n-t)$.
The claim follows by taking a union bound over all $\binom{n^\alpha}{2}$ pairs of the
randomly chosen vectors.
\end{IEEEproof}

Now, we are ready to prove Theorem \ref{thm:converse}.

\begin{IEEEproof}[Proof of Theorem \ref{thm:converse}]
We begin by choosing a matrix $G$ uniformly at random from $\Gamma_{n,k,t}$.
This is equivalent of choosing each row of $G$ 
uniformly and independently from the set of all $n$-length $t$-weight
binary vectors. Now, $k>n^\alpha$, hence there exists $n^\alpha$ vectors
among the rows of $G$ such that any two of them have disjoint support
with probability at least $1-t^2n^{2\alpha}/(n-t)$ (from Lemma
\ref{lem:union}). Hence, 
for at least a proportion $1-t^2n^{2\alpha}/(n-t)$ of matrices of 
$\Gamma_{n,k,t}$, there are $n^\alpha$ rows with disjoint supports. Suppose
$G$ is one such matrix. It remains to show that the code $\cC$ defined by
$G$ has probability of error at least $n^\alpha 2^{-\lambda_p t}/\sqrt{t}$
over BSC($p$).

Suppose, without loss of generality, that the all zero vector
is transmitted over a BSC($p$), and $\bfy$ is the vector received.
 We know that there exists at least $n^\alpha$
codewords of weight $t$ such that all of them have disjoint support. 
Let $\bfx_i, 1\le i \le n^\alpha$, be those codewords.
Then, the
probability that the maximum likelihood decoder incorrectly decodes $\bfy$
to $\bfx_i$ is 
\begin{equation*} 
\Pr(\wt(\bfy) > \dist(\bfx_i,\bfy)) \ge \frac{1}{\sqrt{t}}2^{-\lambda_p t}
\end{equation*}
from Lemma \ref{lem:prob1}.
As the codewords $\bfx_1, \dots \bfx_{n^\alpha}$ have disjoint supports,
the probability that the maximum likelihood decoder incorrectly decodes to
any one of them is at least
\begin{equation*} 
1- \Big(1-\frac{1}{\sqrt{t}}2^{-\lambda_p t}\Big)^{n^\alpha} = (1-o(1))\cdot\frac{n^\alpha}{\sqrt{t}}2^{-\lambda_p t}.
\end{equation*}
\end{IEEEproof}

\begin{remark}
Theorem~\ref{thm:converse} is also true for the random ensemble of matrices 
where the entries are independently chosen from $\ff_2$ with $\Pr(1) =t/n$.
\end{remark}
%

\section{Impossibility result for local recovery}\label{sec:recover}
Now that we have our main impossibility result on update-efficiency we turn to
the local recovery property. In this section, we
  deduce  the converse result concerning local recovery for the binary erasure channel.
We show that any code with a given   local recoverability has to have
rate bounded away from capacity to provide arbitrarily small probability of error,
when used over the binary
erasure channel. 
In particular,  for any code, including non-linear codes, recovery
complexity at a gap of $\epsilon$ to capacity on the BEC must be at least
$\Omega(\log 1/\epsilon)$, proving that the above LDPC construction is simultaneously
optimal to within constant factors for both update-efficiency and local recovery.


The intuition for the converse is that if a code has low
local recovery complexity, then  
codeword positions can be predicted by looking at a few
codeword symbols. As we will see, this implies that the code
rate must be bounded away from capacity, or the probability of
error approaches $1$. In a little more detail, for an 
erasure channel, the average error probability is related to 
how the codewords behave under projection onto the unerased
received symbols. Generally, different codewords
may result in the same string under projection, and
without loss of generality, the ML decoder can be assumed
to choose a codeword from the set of codewords matching the
received channel output in the projected coordinates uniformly
at random. Thus, given a particular erasure pattern induced
by the channel, the average probability of decoding success 
for the ML decoder is simply the number of different
codeword projections, divided by $2^{Rn}$, the size of the 
codebook. We now show that the number of different projections
is likely to be far less than $2^{Rn}$.

\begin{theorem}\label{thm:local_converse}
Let $\cC$ be a code of  length $n$ and rate $1-p -\epsilon$, $\epsilon>0$, that
achieves probability of error strictly less than $1$, when used over
BEC($p$). Then, the local recoverability of $\cC$ is at least $c\log
1/\epsilon$, for some constant $c>0$ and $n$ sufficiently large.
\end{theorem}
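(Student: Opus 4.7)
The plan is to upper-bound the probability of correct ML decoding over BEC($p$) through the identity $P_c = \avg_I[|\cC_I|]/|\cC|$ (with ties broken uniformly), where $I$ is the random set of unerased positions and $\cC_I$ is the projection of $\cC$ onto $I$. The aim is to show that local recoverability $r < c\log(1/\epsilon)$ forces $|\cC_I| \ll |\cC| = 2^{(1-p-\epsilon)n}$ with high probability, so that $P_c \to 0$ and hence $P_e \to 1$, contradicting the hypothesis $P_e < 1$.

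First, for each coordinate $j$ fix a recovery set $S_j$ with $|S_j| \le r$ and a function $f_j$ such that $x_j = f_j(x_{S_j})$ on all of $\cC$. Call $J \subseteq I$ \emph{admissible} if $S_j \subseteq I \setminus J$ for every $j \in J$; for any such $J$, each $x_j$ with $j \in J$ is determined by $\bfx|_{I \setminus J}$, so the projection $\cC_I \to \cC_{I \setminus J}$ is a bijection and $|\cC_I| \le 2^{|I|-|J|}$. Let $R(I) = \{j \in I : S_j \subseteq I\}$; since the $r+1$ coordinates in $\{j\} \cup S_j$ are disjoint, $\avg|R(I)| = n(1-p)^{r+1}$. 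A probabilistic alteration---include each $j \in R(I)$ in a sample independently with probability $q = 1/(2r)$, then delete any sampled $j$ whose $S_j$ meets the sample (the expected number of deletions is at most $q^2 r |R(I)|$)---produces an admissible $J$ with $|J| \ge |R(I)|/(4r)$ deterministically.

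Combined with the Chernoff bound $|I| \le (1-p)n + o(n)$ with high probability, this gives $|\cC_I|/|\cC| \le 2^{\epsilon n + o(n) - |R(I)|/(4r)}$ on the typical event for $|I|$. To conclude $P_c \to 0$, I need $|R(I)| \ge n(1-p)^{r+1}/2$ with probability $1-o(1)$. This is the principal obstacle: the indicators $\mathbf{1}_{j \in R(I)}$ are correlated through shared coordinates in the $S_j$'s, and a naive McDiarmid bound has per-coordinate sensitivity $1 + d_i$ where $d_i = |\{j : i \in S_j\}|$ may be very uneven in the worst case. I plan to handle this either via a second-moment / Paley--Zygmund argument (bounding the contribution of overlapping pairs to $\Var(|R(I)|)$), or by first pruning to $\ge n/2$ positions whose recovery sets avoid the few ``crowded'' coordinates---which must exist by the average bound $\sum_i d_i = \sum_j |S_j| \le nr$---after which all relevant sensitivities become $O(r)$ and McDiarmid applies cleanly.

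Granting this concentration, take $r < c\log(1/\epsilon)$ with $c < 1/\log(1/(1-p))$. A direct computation using $(1-p)^{r+1} = (1-p)\,\epsilon^{c\log(1/(1-p))}$ shows $(1-p)^{r+1}/(8r) \gg \epsilon$ as $\epsilon \to 0$, so the exponent $\epsilon n + o(n) - |R(I)|/(4r)$ tends to $-\infty$ on the high-probability event, yielding $|\cC_I|/|\cC| \to 0$ there. Bounding the contribution on the complementary event by the trivial $|\cC_I|/|\cC| \le 1$ still gives $P_c \to 0$, so $P_e \to 1$, contradicting $P_e < 1$. Hence $r \ge c\log(1/\epsilon)$, the assertion of the theorem. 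The argument operationalizes the authors' intuition that ``codeword positions can be predicted by looking at a few codeword symbols'': the predictions occur precisely at the coordinates of $J$.
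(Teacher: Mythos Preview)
Your reduction to bounding $|\cC_I|$ via $P_c = \avg_I |\cC_I|/|\cC|$ is the right starting point, and the alteration argument correctly produces, for every realization of $I$, an admissible $J$ with $|J| \ge |R(I)|/(4r)$, hence the pointwise inequality $\log|\cC_I| \le |I| - |R(I)|/(4r)$. The gap is exactly where you locate it, but neither of the two fixes you propose closes it. For a concrete obstruction, let coordinate $1$ belong to $S_j$ for every $j\ne 1$. Then $|R(I)|$ is essentially bimodal: it is $O(1)$ when coordinate $1$ is erased (probability $p$) and of order $n(1-p)^r$ when it is not. Thus $\Var(|R(I)|)=\Theta\bigl((\avg|R(I)|)^2\bigr)$, so Paley--Zygmund yields only a constant lower bound on $\Pr\bigl(|R(I)|\ge\tfrac12\avg|R(I)|\bigr)$; meanwhile your pruning is vacuous, since every $S_j$ meets the single crowded coordinate and no positions survive. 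A constant-probability good event gives only $P_e\ge\delta>0$, not $P_e\to 1$, which is what the argument (and the paper's conclusion) requires.

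The repair---and this is precisely the paper's device---is not to concentrate $|R(I)|$ at all, but to concentrate $\log|\cC_I|$ directly. Adding or removing a single coordinate from $I$ changes the number of distinct projections by at most a factor of $2$, so $\log|\cC_I|$ is a $1$-Lipschitz functional of the $n$ independent erasure indicators, and Azuma's inequality applies uniformly over all recovery-set structures. Your pointwise bound already gives
\[
\avg\log|\cC_I| \le n(1-p) - \frac{n(1-p)^{r+1}}{4r}
\]
by linearity of expectation alone; Azuma on $\log|\cC_I|$ then yields $\log|\cC_I|\le n(1-p-2\epsilon)$ with probability $1-e^{-\epsilon^2 n/2}$ once $(1-p)^{r+1}/(4r)\ge 3\epsilon$, and the proof concludes exactly as you wrote. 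The paper also replaces your alteration with a cleaner combinatorial step: fix a permutation of the coordinates under which the set $T$ of positions whose recovery sets lie entirely to their left has $|T|\ge n/(r+1)$ (such a permutation exists by averaging); then every $j\in T$ with $S_j$ unerased can be peeled off in decreasing order without changing $|\cC_I|$, giving the same kind of pointwise bound without the $1/(4r)$ sampling loss.
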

\begin{proof}
Let $\cC$ be a code of length $n$ and size $2^{nR}$ that has local
recoverability $r$.  Let $T$ be the set of coordinates with the
property that the query positions required to recover these
coordinates appear before them. To show that such an ordering exists
with $|T| \ge n/(r+1)$, we can randomly and uniformly permute the
coordinates of $\cC$. The expected number of such coordinates is then
$n/(r+1)$, hence some ordering exists with $|T| \ge n/(r+1)$.


Assume  $I \subseteq \{1,\dots, n\}$ is the set
of coordinates erased by the BEC, and let $\bar{I} = \{1,\dots, n\} \setminus I.$
Let $\bfx\in \cC$ be a randomly and uniformly chosen codeword. $\bfx_I$ and
$\bfx_{\bar{I}}$ denote the projection of $\bfx$ on the respective coordinates.
We are interested in the logarithm of the number of different
codeword projections onto $\bar{I}$, which we denote 
by $\log{S(\bfx_{\bar{I}})}$. Note that this is a 
random-variable with respect to the random choice of $I$ by the BEC.

Suppose that the number of elements of $T$ that have all $r$ of their recovery positions 
un-erased is $u$. Then, the number of different codeword projections is unchanged
if we remove these $u$ elements from $T$. Hence,
\begin{equation*} 
\log{S(\bfx_{\bar{I}})} \le |\bar{I}| - u.
\end{equation*}
But $\avg u \ge (1-p)^r |T|$.  Therefore,
\begin{equation*}
\avg \log{S(\bfx_{\bar{I}})} \le n(1-p) - (1-p)^r \frac{n}{r+1}.
\end{equation*}

Observe that $\log{S(\bfx_{\bar{I}})}$ is a $1$-Lipschitz functional of
independent random variables (erasures introduced by the channel). This 
is because projecting onto one more position cannot decrease the number
of different codeword projections, and at most doubles the number
of projections. Therefore, we can use Azuma's inequality to conclude
that
\begin{equation*}
\Pr\Big(\log{S(\bfx_{\bar{I}})} > n(1-p) - (1-p)^r \frac{n}{r+1} + \epsilon n  \Big) \le e^{-\frac{\epsilon^2 n}2}.
\end{equation*}
If we have,
$$
r \le \frac{\log\frac1{3\epsilon}}{\log\frac2{1-p}},
$$
then,
$$
\frac{(1-p)^r}{1+r} \ge \Big(\frac{1-p}2\Big)^r \ge 3\epsilon.
$$
%
%
%
But this implies,
\begin{equation*}
\Pr\Big(\log{S(\bfx_{\bar{I}})} > n(1-p-2\epsilon)  \Big) \le e^{-\frac{\epsilon^2 n}2}.
\end{equation*}
This means that for a suitable constant $c$, if $r \le c\log1/\epsilon$, then
with very high probability
$\log{S(\bfx_{\bar{I}})} \le n(1-p-2\epsilon)$.
However, there are $2^{Rn}=2^{n(1-p-\epsilon)}$ codewords, so we conclude
that the probability 
of successful decoding is at most 
\begin{equation*}
2^{-\epsilon n}+e^{-\frac{\epsilon^2 n}2}.
\end{equation*}
Thus, we have proved that if $r \le c\log1/\epsilon$,
the probability of error converges to $1$, and in particular, is larger than
any $\alpha<1,$ for sufficiently large $n$.
\end{proof}

\begin{remark}
Rather than considering the number of different codeword projections,  we could have considered
the entropy of the distribution of codeword projections onto $I$, which is also a $1$-Lipschitz
functional. This is a more general approach that can be extended to the case where local recovery can 
be adaptive and randomized, and only has to succeed with a certain probability (larger than $.5$), as opposed to 
providing guaranteed recovery. However, one obtains a bound of $n(1-p-2\epsilon)$
on the the entropy, so Fano's inequality only shows that the probability of error 
must be $\Omega(\epsilon)$, while the above analysis shows that the probability of error
must be close to $1$. 
%
\end{remark}

\section{General update-efficient codes} \label{sec:general}
In this section we discuss some further observations regarding the update-efficiency
of codes. Let us now give a more general definition of update-efficiency that
we started with in the introduction.
\begin{definition}
A code is called $(u,t)$-update-efficient if, for any $u$ bit changes in the message,
the codeword changes by at most $t$ bits. In other words, the code $(\cC,\phi)$ is  {\em $(u,t)$-update-efficient}
if
for all $\bfx \in \ff^k_2,$ and for
all $\bfe\in \ff_2^k : \wt(\bfe) \le u,$ we have $\phi(\bfx +\bfe) = \phi(\bfx) + \bfe',$
for some $\bfe' \in \ff_2^n: \wt(\bfe') \le t.$
\end{definition}

It is easy to see that an $(1,t)$-update-efficient code is a code with update-efficiency $t.$
As discussed earlier, any $(u,t)$-update-efficient code must satisfy $t > d$, the minimum distance of
the code. In fact, we can make a stronger statement.

\begin{proposition}
Suppose a $(u,t)$-update-efficient code of length $n$, dimension $k$, and minimum distance $d$
exists. Then, 
$$
\sum_{i=0}^u \binom{k}{i} \le B(n,d,t),
$$
where $B(n,d,w)$ is the size of the largest code with distance $d$ such that
each codeword has weight at most $w.$
\end{proposition}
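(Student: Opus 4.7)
The plan is to fix a reference message and look at the ``local ball'' around its codeword; the update-efficiency forces this ball to contain many low-weight differences, while the minimum distance forces those differences to be pairwise far apart, giving a constant-weight code of the claimed size.

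First I would fix an arbitrary message $\bfx_0\in\ff_2^k$ and consider the set of perturbations
\[
E \;=\; \{\bfe\in\ff_2^k : \wt(\bfe)\le u\},
\qquad |E| \;=\; \sum_{i=0}^{u}\binom{k}{i}.
\]
For each $\bfe\in E$, the $(u,t)$-update-efficiency hypothesis guarantees an $\bfe'\in\ff_2^n$ with $\wt(\bfe')\le t$ such that $\phi(\bfx_0+\bfe)=\phi(\bfx_0)+\bfe'$. Since $\phi$ is injective, $\bfe'$ is uniquely determined by $\bfe$, namely $\bfe'=\phi(\bfx_0+\bfe)+\phi(\bfx_0)$, so the map $\bfe\mapsto\bfe'$ is a well-defined function from $E$ into the Hamming ball of radius $t$ in $\ff_2^n$.

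Next I would argue that this map is injective and that its image is a constant-weight code of minimum distance at least $d$. Injectivity is immediate from the injectivity of $\phi$. For the distance, take any $\bfe_1\ne\bfe_2$ in $E$ and observe
\[
\dist(\bfe'_1,\bfe'_2) \;=\; \wt(\bfe'_1+\bfe'_2) \;=\; \wt\bigl(\phi(\bfx_0+\bfe_1)+\phi(\bfx_0+\bfe_2)\bigr) \;\ge\; d,
\]
since $\phi(\bfx_0+\bfe_1)$ and $\phi(\bfx_0+\bfe_2)$ are two distinct codewords of $\cC$. Hence $\{\bfe':\bfe\in E\}$ is a code in $\ff_2^n$ of size $\sum_{i=0}^{u}\binom{k}{i}$, minimum distance at least $d$, with every codeword of weight at most $t$. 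By the definition of $B(n,d,t)$ as the maximum size of such a code, we conclude
\[
\sum_{i=0}^{u}\binom{k}{i} \;\le\; B(n,d,t),
\]
which is the claim.

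There is no real obstacle: the only subtlety is that the argument does not require $\cC$ to be linear, since the ``difference'' $\phi(\bfx_0+\bfe)+\phi(\bfx_0)$ always makes sense in $\ff_2^n$, and the update-efficiency condition is precisely what forces this difference to have small weight. The translation by $\phi(\bfx_0)$ turns the constraint ``codewords near $\phi(\bfx_0)$'' into ``low-weight vectors at pairwise distance $\ge d$,'' which is exactly the object counted by $B(n,d,t)$.
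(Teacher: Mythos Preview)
Your proof is correct and follows essentially the same approach as the paper: fix a reference message, translate the nearby codewords by the reference codeword to obtain low-weight vectors that are pairwise at distance at least $d$, and compare against $B(n,d,t)$. Your write-up is in fact a bit more explicit about injectivity and the distance computation, and your remark that linearity of $\cC$ is not needed is a valid observation the paper leaves implicit.
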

\begin{IEEEproof}
Suppose $\cC$ is an update-efficient code, where
$\bfx \in \ff_2^k$ is mapped to $\bfy \in \ff_2^n.$ The
$\sum_{i=0}^u \binom{k}{i}$ different message vectors
within distance $u$ from $\bfx$ should map to  codewords within 
distance $t$ from $\bfy.$ Suppose these codewords are
$\bfy_1, \bfy_2, \dots.$ Consider the vectors $\bfy-\bfy,
\bfy_1-\bfy, \bfy_2-\bfy, \dots.$ These must be at least distance $d$ apart
from one another and all of their weights are at most $t$. This proves
the claim.
\end{IEEEproof}

There are a number of useful upper bounds on the maximum size of
constant weight codes (i.e., when the codewords have a constant weight
$t$) that can be used to upper bound $B(n,d,t).$ Perhaps the most
well-known bound is the Johnson bound \cite{J1962}. An easy extension
of this bound says $B(n,d,t) \le dn /(dn -2tn+2t^2)$, as long as
the denominator is positive. However, this bound is not very
interesting in our case, where we have $n \gg t \ge d.$ The
implications of some other bounds on $B(n,d,t)$ on the parameters of
update-efficiency is a topic of independent interest.

Note that any code with update-efficiency $t$ is a $(u,ut)$-update-efficient code. Hence, from Section \ref{explicit}, we can 
 construct an $(u,O(u\log n))$ update-efficient code that achieves 
the capacity of a BSC($p$).
On the other hand one expects a converse result of the form
$$
\sum_{i=0}^u \binom{k}{i} \le K(n,t,p),
$$
where $K(n,t,p)$ is the maximum size of a code with codewords having weight bounded 
by $t$ that achieves arbitrarily small probability of error. 
Indeed, just by emulating the proof of Theorem \ref{thm:converse3}, we obtain
the following result.
\begin{theorem}\label{thm:converse3gen}
Consider using some (possibly non-linear) $(u,t)$-update-efficient code of length $n$, and 
dimension (possibly fractional) $k$  over BSC($p$). Assume that
\begin{equation*} 
t \leq \frac{(1-\alpha)\log \sum_{i=0}^u
  \binom{k}{i}}{\log(1-p)/p},
\end{equation*}
for any $\alpha >0$. 
Then, the average probability of error is at least $1-o(1)$, where
$o(1)$ denotes a quantity that goes to zero as $k \rightarrow \infty$.
\end{theorem}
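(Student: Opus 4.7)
The plan is to mirror the proof of Theorem~\ref{thm:converse3} with $k+1$ replaced everywhere by $M:=\sum_{i=0}^{u}\binom{k}{i}$, the number of messages within Hamming distance $u$ of a fixed message. By the $(u,t)$-update-efficiency hypothesis, for every message $\bfx\in\ff_2^k$ the $M$ codewords $\{\phi(\bfy):\wt(\bfx-\bfy)\le u\}$ all lie in a single Hamming ball of radius $t$ centered at $\phi(\bfx)$, which is precisely the packing structure exploited in the $u=1$ case.

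The first step is to generalize the packing lemma embedded in the proof of Theorem~\ref{thm:converse3}: any set of $M$ codewords confined to a Hamming ball of radius $t$ suffers average maximum-likelihood error probability at least $1-2t(1-p)^t/[(M-1)p^t]+o(1)$ on BSC($p$). I would repeat that argument verbatim---condition on a typical error weight $w\in[pn-n^{2/3},pn+n^{2/3}]$, use disjointness of the ML decoding regions together with $|\bigcup_i D_i|\le(2t+1)\binom{n}{w+t}$, and apply $\binom{n}{w+t}/\binom{n}{w}\le((n-w)/w)^t$. Substituting the hypothesis $t\le(1-\alpha)\log M/\log((1-p)/p)$ turns the bound into $1-2t/M^{\alpha}+o(1)$, which converges to $1$ as $k\to\infty$ since $t$ is only logarithmic in $M$ while $M^{\alpha}$ grows polynomially in $M$.

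The second step is the averaging lift. For each message $\bfx$ define the subcode $\cC_\bfx:=\{\phi(\bfy):\wt(\bfx-\bfy)\le u\}$, apply the packing lemma to conclude $P_{\cC_\bfx}\ge 1-o(1)$, and then interchange the order of summation
\[
\sum_{\bfx}P_{\cC_\bfx}=\frac{1}{M}\sum_{\bfx}\sum_{\bfz\in\cC_\bfx}P_\bfz=\frac{1}{M}\sum_{\bfz}d_\bfz\,P_\bfz,
\]
where $d_\bfz:=|\{\bfx:\bfz\in\cC_\bfx\}|$ is exactly the number of messages within Hamming distance $u$ of $\phi^{-1}(\bfz)$ and therefore bounded by $M$. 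This cancels the $1/M$ factor and gives $|\cC|^{-1}\sum_\bfx P_{\cC_\bfx}\le P_\cC$, so $P_\cC\ge 1-o(1)$.

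The only real obstacle is verifying that the packing-lemma error estimate genuinely tends to $1$ in this more general regime, which reduces to showing $M^\alpha/t\to\infty$ as $k\to\infty$; this is automatic because the hypothesis forces $t=O(\log M)$ while $M\to\infty$ (already $M\ge k+1$ when $u\ge 1$). All other ingredients---the typicality argument for the BSC flip count, the disjointness of the ML regions, and the monotonicity of $\binom{n}{i}$ for $i\le n/2$---transfer from the $u=1$ case without modification, so the proof essentially writes itself once the substitution $k+1\mapsto M$ is made consistently.
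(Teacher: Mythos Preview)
Your proposal is correct and matches the paper's approach exactly: the paper gives no separate proof, only the remark that the result follows ``by emulating the proof of Theorem~\ref{thm:converse3}'', and your substitution $k+1\mapsto M=\sum_{i\le u}\binom{k}{i}$ together with the packing and averaging steps is precisely that emulation. Your averaging step is in fact slightly cleaner than the paper's own $u=1$ version, since you exploit the symmetry $d_\bfz=M$ of Hamming balls in $\ff_2^k$ rather than the looser bound $d_\bfx\le n$, thereby avoiding the stray $k/n$ factor.
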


This shows that the $(u,O(u\log n))$ update-efficient code constructed
by the method of Section \ref{explicit}, is almost optimal for  $u \ll n.$

\begin{remark}[Bit error rate and error reduction codes]
Suppose we change the model of update-efficient code in the following way (limited to 
only this remark). The encoding $\phi: \ff_2^k \to \ff_2^n$ and decoding $\theta: \ff_2^n \to \ff_2^k,$
is such that for a random error vector $\bfe$ induced by the BSC($p$) and any $\bfx \in \ff_2^n$, 
$\dist(\theta(\phi(\bfx)+\bfe), \bfx ) \sim o(k)$ with high probability. This can be thought of as an error-reducing code or 
a code with low message bit error rate \cite{kochman2012adversarial}. Under this notion, error-reducing 
codes are update-efficient.
When the message changes $\le u$ bits from the previous state $\bfx\in \ff_2^k$, 
we do not change the codeword. Then, the decoder output will be within $o(k)+u$ bits
from the original
message.
\end{remark}


\section{Rate-Distortion Counterparts}
\label{ratedistortion}

In this paper, we have focused on error correcting codes possessing
good update efficiency and local recovery properties. In principle,
these properties are also applicable to the problem of lossy source coding.
Informally, lossy source coding is concerned with optimally compressing a source
so that the source can be reconstructed up to a specified distortion from its
compressed representation.
We refer the reader to any standard textbook on information theory (e.g., \cite{cover2012elements})
for a formal definition of the lossy source coding problem in terms of a source
and distortion model.
The associated rate-distortion function $R(D)$ expresses the
optimal (smallest) rate achievable given a tolerable reconstruction distortion $D$. 


Update-efficiency and local recoverability have natural analogs for lossy
source codes. In more detail, update-efficiency can be measured by asking
how much the encoding (compression) of the source changes when the source is changed
slightly, e.g., how many bits of the compression change when a single bit of the original
source is changed. In the context of lossless source coding, update-efficient
codes have been considered before in several papers, e.g., \cite{montanari2008smooth}. 
Local recoverability for a lossy source code can be measured
by the number of bits of the compression that must be queried in order to recover
any particular symbol of the source reconstruction. That is, a lossy source code has good
local recoverability if, for all indices $i$, few bits of the compression must be read in order to compute
the $i^{th}$ symbol of the lossy reconstruction.

The main questions to be asked, in the spirit of this paper, are 1) if we allow a compression rate slightly
above the optimal rate specified by rate-distortion function, i.e., 
rate $R(D)+\epsilon$, what is the best possible local
recoverability, and 2) what is the best possible update-efficiency (in terms of $\epsilon$)?
As a simple example, we consider these questions in the context of compressing a uniform
binary source under Hamming distortion (again, the reader is referred to standard 
information theory textbooks such as \cite{cover2012elements}
for a formal definition of this model). In this case, we briefly describe known results that allow
us to make some progress on questions 1 and 2. First, it can be shown that local recoverability 
must grow as
$\Omega(\log(1/\epsilon))$, that is, at least $\Omega(\log(1/\epsilon))$ bits of the compression
must be queried to recover the $i^{th}$ symbol of the reconstruction. This is a corollary of 
known results for
LDGM codes (Theorem 5.4.1 from \cite{chandar2010sparse}), and the proof 
given there already applies to arbitrary
(even non-linear) codes. It is known 
that LDGM codes can achieve $O(\log(1/\epsilon))$ local recovery complexity,
so in this simple case, the local recoverability can be characterized up to a constant factor.

Update-efficiency, on the other hand, remains an open question, even for this simple model.
We note that update-efficiency of $O(1/\epsilon\log(1/\epsilon))$ can be achieved
via random codes. This can be seen by the following argument.
First, it is easily verified that a random code of length
$O(1/\epsilon\log(1/\epsilon))$ 
can achieve rates at most $\epsilon$ above the rate-distortion function. 
Copying the strategy from Section~\ref{explicit}, we construct a length $n$ 
lossy source code by splitting the $n$-bit source into blocks of length 
$O(1/\epsilon\log(1/\epsilon))$, and apply the
random code constructed above to each block. Changing one bit of the source only affects
the corresponding block, so changing one bit of the source
requires updating at most $O(1/\epsilon\log(1/\epsilon))$ bits of the compression.
The length $n$ code clearly has the same compression rate as the base random code,
and achieves exactly the the same expected distortion because of the linearity of expectation and the memoryless
nature of the source and distortion models considered.
Therefore, $O(1/\epsilon\log(1/\epsilon))$ update-efficiency is achievable.
However, it is unclear that this is optimal. In particular, 
we are unaware of any lower bound showing that the update-efficiency has to scale with $\epsilon$ at all.
A thorough study of local
recoverability and update-efficiency for lossy source coding is left for future work.

\section{Concluding Remarks}
\label{sec:conc}

Although our results are derived for binary-input channels, as opposed to the
large alphabet channel models usually considered for distributed storage, our proofs
extend  to large alphabet case. The $q$-ary generalizations for BSC and BEC are respectively the
{\em $q$-ary symmetric channel} and {\em $q$-ary erasure channel}. The definitions and capacities of these
channels are standard and can be found in textbooks, for example, in \cite[\S1.2 \& \S1.5.3]{roth2006introduction}. 

The existential result of Section~\ref{explicit} extends to the case of $q$-ary channels. See, \cite[\S IIIB, remark 6]{barg2002random} 
for more detail on how the error-exponent result for BSC extends to $q$-ary case. There are also results
concerning the error-exponent of $q$-ary {\em low density} codes that can be used to extend Theorem
\ref{thm:exists}. The result one can most directly use is perhaps \cite{hof2009performance}.

The converse results for Section~\ref{sec:LDGM} and \ref{sec:recover}, in particular Theorem~ \ref{thm:converse2}, Theorem~ \ref{thm:converse3} and
Theorem~ \ref{thm:local_converse} can easily be stated for the case of $q$-ary channel. The observations
regarding adversarial error case of Section~\ref{adversarial} is also extendable to $q$-ary case in a straight-forward
manner.

\section*{Acknowledgment}

A.~M.~thanks Alexander Barg, Yury Polyanskiy  and Barna Saha for useful
discussions.

\bibliographystyle{abbrv}
\bibliography{aryabib}

\end{document}